\colorlet{MyBlue}{DodgerBlue!75!Black}
\colorlet{MyGreen}{DarkGreen!85!Black}
\def\EMAIL#1{\href{mailto:#1}{#1}}
\newcommand{\eps}{\varepsilon}
\newcommand{\from}{\colon}
\newcommand{\pd}{\partial}
\newcommand{\R}{\mathbb{R}}
\DeclareMathOperator*{\argmax}{arg\,max}
\DeclareMathOperator*{\intersect}{\bigcap}
\DeclareMathOperator{\bigoh}{\mathcal O}
\DeclareMathOperator{\ex}{\mathbb{E}}
\DeclareMathOperator{\prob}{\mathbb{P}}
\DeclareMathOperator{\supp}{supp}
\providecommand\given{} % provides an empty command for the delimiters below
\DeclarePairedDelimiter{\abs}{\lvert}{\rvert}
\DeclarePairedDelimiter{\norm}{\lVert}{\rVert}
\newcommand{\dnorm}[1]{\norm{#1}_{\infty}}
\DeclarePairedDelimiterX{\braket}[2]{\langle}{\rangle}{#1\mathopen{}\delimsize\vert\mathopen{}#2}
\DeclarePairedDelimiterX{\inner}[2]{\langle}{\rangle}{#1,#2}
\DeclarePairedDelimiterX{\setdef}[2]{\{}{\}}{#1:#2}
\DeclarePairedDelimiterXPP\exclude[1]{\mathopen{}\setminus}{\{}{\}}{}{#1}
\DeclarePairedDelimiterXPP{\probof}[1]{\prob}{(}{)}{}{%
\renewcommand\given{\nonscript\,\delimsize\vert\,\nonscript\mathopen{}}
#1}
\DeclarePairedDelimiterXPP{\exof}[1]{\ex}{[}{]}{}{%
\renewcommand\given{\nonscript\,\delimsize\vert\,\nonscript\mathopen{}}
#1}
\DeclarePairedDelimiterXPP{\modexof}[1]{\ex'}{[}{]}{}{%
\renewcommand\given{\nonscript\,\delimsize\vert\nonscript\,\mathopen{}}
#1}
\newcommand{\txs}{\textstyle}
\newcommand{\textpar}[1]{\textup(#1\textup)}
\newcommand{\insum}{\sum\nolimits}
\newcommand{\as}{\textup(a.s.\textup)\xspace}
\theoremstyle{plain}
\newtheorem{theorem}{Theorem}
\newtheorem{corollary}[theorem]{Corollary}
\newtheorem*{corollary*}{Corollary}
\newtheorem{proposition}[theorem]{Proposition}
\theoremstyle{definition}
\newtheorem*{definition*}{Definition}
\theoremstyle{remark}
\newtheorem{remark}{Remark}
\newtheorem*{remark*}{Remark}
\newcommand{\bvec}{e}
\newcommand{\fench}{F}
\DeclareMathOperator{\logit}{\Lambda}
\newcommand{\play}{i}
\newcommand{\nPlayers}{N}
\newcommand{\players}{\mathcal{\nPlayers}}
\newcommand{\pure}{s}
\newcommand{\purealt}{\pure'}
\newcommand{\nPures}{S}
\newcommand{\pures}{\mathcal{\nPures}}
\newcommand{\actions}{\mathcal{X}}
\newcommand{\strats}{\actions}
\newcommand{\pay}{u}
\newcommand{\payv}{v}
\newcommand{\eq}{x^{\ast}}
\newcommand{\peq}{\pure^{\ast}}
\newcommand{\argdot}{\,\cdot\,}
\newcommand{\dkl}{D_{\textup{KL}}}
\newcommand{\filter}{\mathcal{F}}
\DeclareMathOperator{\simplex}{\Delta}
\newcommand{\intsimplex}{\simplex^{\!\circ}}
\newcommand{\step}{\gamma}
\newcommand{\noise}{\xi}
\newcommand{\snoise}{\psi}
\newcommand{\noisedev}{\sigma}
\newcommand{\noisevar}{\noisedev^{2}}
\newcommand{\vbound}{L}
\begin{document}

%*************************************************************
%*****    FRONT MATTER AND METADATA
%*************************************************************

%----------------------------------------------------------------------
%%% TITLE & AUTHORS
%----------------------------------------------------------------------
\title
[Exponentially fast convergence to (strict) equilibrium via hedging]
{Exponentially fast convergence to\\(strict) equilibrium via hedging}

\author[J.~Cohen]{Johanne Cohen$^{\ast}$}
\address{$^{\ast}$ LRI-CNRS, Université Paris-Sud,Université Paris-Saclay, France}
\email{\EMAIL{johanne.cohen@lri.fr}}
%\urladdr{\URL{http://www.lri.fr/\~jcohen}}

\author[A.~Héliou]{Amélie Héliou$^{\ddag}$}
\address{$\ddag$ LIX, Ecole Polytechnique,Université Paris-Saclay, France}
\email{\EMAIL{amelie.heliou@polytechnique.edu}}
%\urladdr{\URL{http://www.lix.polytechnique.fr/Labo/Amelie.Heliou/}}

\author[P.~Mertikopoulos]{Panayotis Mertikopoulos$^{\S}$}
\address{$^{\S}$ LIG-CNRS, Univ. Grenoble-Alpes, Grenoble, France}
\email{\EMAIL{panayotis.mertikopoulos@imag.fr}}
%\urladdr{\URL{http://mescal.imag.fr/membres/panayotis.mertikopoulos}}

%\author{Johanne Cohen\inst{1} \and Am\'elie H\'eliou\inst{2} \and  Panayotis Mertikopoulos\inst{3}}
%%
%%\authorrunning{Johanne Cohen et al.} % abbreviated author list (for running head)
%%
%%%%% list of authors for the TOC (use if author list has to be modified)
%\tocauthor{Johanne Cohen, Am\'elie H\'eliou, and  Panayotis Mertikopoulos}
%%
%\institute{LRI-CNRS, Universit\'{e} de Paris-Sud,Universit\'{e} Paris-Saclay, France,\\
%\email{johanne.cohen@lri.fr},\\ WWW home page:
%\texttt{http://www.lri.fr/\homedir jcohen}
%\and
%LIX, Ecole Polytechnique,Universit\'{e} Paris-Saclay, France,\\
%\email{amelie.heliou@polytechnique.edu},\\  
%WWW home page:
%\texttt{http://www.lix.polytechnique.fr/Labo/Amelie.Heliou/}
%\and LIG-CNRS, Grenoble, France \\ \email{panayotis.mertikopoulos@imag.fr}, \\
%WWW home page:
%\texttt{http://mescal.imag.fr/membres/panayotis.mertikopoulos}
%}

\maketitle

%----------------------------------------------------------------------
%%% ACRONYMS
%----------------------------------------------------------------------
\newcommand{\acdef}[1]{\textit{\acl{#1}} \textup{(\acs{#1})}\acused{#1}}
\newcommand{\acdefp}[1]{\emph{\aclp{#1}} \textup(\acsp{#1}\textup)\acused{#1}}

\newacro{iid}[i.i.d.]{independent and identically distributed}
\newacro{KKT}{Karush\textendash Kuhn\textendash Tucker}
\newacro{KL}{Kullback\textendash Leibler}
\newacro{EW}{exponential weights}
\newacro{NE}{Nash equilibrium}
\newacroplural{NE}[NE]{Nash equilibria}
\newacro{GESS}{globally evolutionarily stable state}
\newacro{MSE}{mean squared error}

%----------------------------------------------------------------------
%%% ABSTRACT
%----------------------------------------------------------------------
\begin{abstract}
%----------------------------------------------------------------------
%%% ABSTRACT
%----------------------------------------------------------------------
% !TEX root = ./WINE.tex

Motivated by applications to data networks where fast convergence is essential, we analyze the problem of learning in generic $\nPlayers$-person games that admit a \acl{NE} in pure strategies.
Specifically, we consider a scenario where players interact repeatedly and try to learn from past experience by small adjustments based on local \textendash\ and possibly imperfect \textendash\ payoff information.
For concreteness, we focus on the so-called ``hedge'' variant of the \ac{EW} algorithm where players select an action with probability proportional to the exponential of the action's cumulative payoff over time.
When the players have perfect information on their mixed payoffs, the algorithm converges locally to a strict equilibrium and the rate of convergence is exponentially fast \textendash\ of the order of $\bigoh(\exp(-a\sum_{j=1}^{t}\step_{j}))$ where $a>0$ is a constant and $\step_{j}$ is the algorithm's step-size.
In the presence of uncertainty, convergence requires a more conservative step-size policy, but with high probability, the algorithm still achieves an exponential convergence rate.
%We focus on learning equilibria on static finite games in normal form. We consider a repeated game, in which a set of players are in interaction, and make, at each iteration, a decentralized decision on which strategy to play. 
%We assume that players take decision with only a local view of the game.
%We consider two settings.
%In the first one, we assume that players know of the payoff of mixed strategies. We consider a variant of the  HEDGE algorithm  \cite{FS99} with the introduction of a discounting factor $\gamma_{t}$.
%We prove that this algorithm is guaranteed to converge to any strict equilibria $x^*$, if initialized not too far from $x^*$ and the discounting factor is chosen  in a simple suitable way.
%In the second settings, we prove a similar result assuming that players have only a knowledge of payoffs of  pure strategies.
%In both cases, we provide some bounds on the rate of convergence of the algorithm: the rate of convergence  is proved to be exponential when  the discounting factor is constant. We also prove that in both settings, the results still hold even when players have only noisy observations.
%Our proofs are based heavily on the Kullback-Leibler divergence, martingale arguments and minor gradient arguments.

\end{abstract}
\acresetall

%----------------------------------------------------------------------
%%% KEYWORDS
%----------------------------------------------------------------------
%\subjclass[2010]{Primary ; Secondary .}
%\keywords{}

%*************************************************************
%*****    BODY TEXT
%*************************************************************

%----------------------------------------------------------------------
%%% INTRODUCTION
%----------------------------------------------------------------------
\section{Introduction}
\label{sec:introduction}
%----------------------------------------------------------------------
%%% INTRODUCTION
%----------------------------------------------------------------------
% !TEX root = ./WINE.tex

This paper is a contribution to the following questions:
if the players of a repeated game update their strategies in an individually rational way (for instance, following an algorithm that leads to no regret), do the players' mixed strategies themselves converge to a \acl{NE} of the one-shot game?
If so, what is the resulting rate of convergence and how is it affected by imperfections in the information available to the players?

These questions are largely motivated by the extremely successful applications of game theory to data networks \cite{NRTV07} and wireless communications \cite{LT10} where fast convergence to a stable state is essential.
%for the system's efficient operation.
In this broad context, players naturally have a very localized view of their environment, typically limited to an estimate of the payoff of their strategies that is often subject to random \textendash\ and possibly unbounded \textendash\ errors and noise.
Thus, to improve their individual payoffs as the game is repeated, we assume that players try to learn from past experiences by employing a suitable learning algorithm that induces small adjustments at each stage.

One of the most widely used learning algorithms of this kind is the so-called \acdef{EW} algorithm which is known to lead to ``no regret'' \textendash\ i.e. the players' average payoff under \ac{EW} is asymptotically the same as that of the best fixed strategy in hindsight \cite{Vov90,LW94}.
More precisely, we focus on the ``hedge'' variant of \ac{EW} \cite{FS99} where the probability of choosing an action is proportional to the exponential of its cumulative payoff \textendash\ hence, better-performing actions are employed exponentially more often.
However, to account for the fact that players may not have access to perfect payoff observations, we introduce a variable step-size parameter which can be used to control the weight with which new observations enter the algorithm, thus reducing the adverse effects of uncertainty.

Instead of restricting our attention to a specific class of games (such as zero-sum or potential ones), we focus throughout on generic $\nPlayers$-player games that admit a \acl{NE} in pure strategies.
Our first result is that, if players have perfect mixed payoff observations, hedging with a variable step-size $\step_{t}>0$ converges locally to a strict \acl{NE} at a rate of $\bigoh(e^{-a\sum_{j=1}^{t}\step_{j}})$, i.e. \emph{exponentially fast} if $\step_{t} = \Omega(1/t^{\beta})$ for some $\beta<1$.
Otherwise, if players only have access to pure payoff observations that are subject to estimation errors, the same convergence rate holds with high probability, but the algorithm's step-size must satisfy the summability requirement $\sum_{t=1}^{\infty} \step_{t}^{2} < \sum_{t=1}^{\infty} \step_{t} = \infty$.
This restriction is needed in order to control the aggregate variance of the noise so, unsurprisingly, it limits the achievable convergence rates.
Nevertheless, with high probability, the algorithm still achieves an $\bigoh(e^{-at^{1-\beta}})$ exponential rate for step-size sequences of the form $\step_{t}\sim 1/t^{\beta}$, $\beta\in(1/2,1)$, despite the noise.
Finally, we show that the above results hold globally if the game admits a unique, globally strict equilibrium.

\paragraph{Related Work.}

Algorithms and dynamics for learning in games have received considerable attention over the last few decades.
Such procedures can be divided into two broad categories, depending on whether they evolve in continuous or discrete time: the former includes the numerous dynamics for learning and evolution (see \cite{San10} for a survey), whereas the latter focuses on learning algorithms for infinitely iterated games (such as fictitious play and its variants).
In this paper, we focus exclusively on discrete-time algorithms.

In this framework, it is natural to consider agents who learn from their experience by small adjustments in their behavior based on local \textendash\ and possibly imperfect \textendash\ information.
%In this case, the dynamics governing the agents' behavior are typically stochastic and distributed, owing to the fact that agents adapt their mixed strategies based on local \textendash\ and possibly imperfect \textendash\ information.
%local knowledge only, and  using mixed strategies.
%Several such adjustment dynamics have been considered recently in the literature.
Several such approaches in the literature can be viewed as \emph{decentralized no-regret dynamics} \textendash\ for example the multiplicative/\acl{EW} algorithm and its variants \cite{Vov90,FS99,LW94},
%Mirror Descent \cite{yudin1983problem},
%and
Follow the Regularized/Perturbed Leader \cite{kalai2005efficient},
etc.
Indeed, regret bounds can be used to guarantee that each player's utility approaches long-term optimality in adversarial environments, a natural first step towards long-term rational behavior.
For example, it has been shown in \cite{BHL08,roughgarden2015intrinsic} that the sum of utilities approaches an approximate optimum, and there is convergence of time averages towards an equilibrium in two-player zero-sum games \cite{BM07a,FV97,FS99}.
In all these examples, the players' average regret vanishes at the worst-case rate of $\bigoh(1/\sqrt{T})$ where $T$ denotes the play horizon.
This convergence rate was recently improved by Syrgkanis \emph{et al.} \cite{NIPS2015} for a wide class of $\nPlayers$-player normal form games using a natural class of regularized learning algorithms.
However, the convergence results established in \cite{NIPS2015} concerned the set of coarse correlated equilibria which may contain highly non-rationalizable (correlated) strategies that assign positive weight \emph{only} on strictly dominated strategies \cite{VZ13}.

In this paper, we aim to provide a more refined analysis for generic $\nPlayers$-player games that admit a \acl{NE} in pure strategies.
In particular, we do not derive convergence bounds for unilateral worst-case rationality criteria (such as the minimization of the players' external regret), but we focus squarely on the convergence of the players' mixed strategies under hedging.
% is not on some external quantities such as the average regret, but on the actual convergence of the dynamics.
%For a variant of the HEDGE dynamics, we prove that when started close to some strict equilibria a convergence occurs, by studying the associated dynamical system. Furthermore, we establish the rate of convergence of the process.  
To that end, we show that HEDGE converges locally to strict equilibria and the rate of said convergence is exponentially fast \textendash\ even in the presence of feedback imperfections of \emph{arbitrary} magnitude.

%We now go to a comparison to some related papers.
The HEDGE algorithm was recently studied by Kleinberg \emph{et al.} \cite{kleinberg2011load} who proved that, in a specific class of load balancing games, the dynamics' long-term limit is exponentially better than the worst correlated equilibrium and almost as good as that of the worst Nash.
%Their proof focuses on particular load balancing games and it is based on the so-called \acdef{KL} divergence.
Krichene \emph{et al.} \cite{krichene2014learning} extended this result to congestion games, and proved that a discounted variant of the HEDGE algorithm converges to the set of Nash equilibria in the sense of Cesàaro means (time averages), while strong convergence can be guaranteed with some additional conditions.
%It is proved that (strong) convergence can be guaranteed with some additional conditions.
%In the current paper, we do not restrict to congestion games, and consider general games.
Coucheney \emph{et al.} \cite{CGM15} also showed that a ``penalty-regulated'' variant of the HEDGE algorithm with bandit feedback converges to $\eps$-equilibrium in congestion games, but their techniques do not extend to actual Nash equilibria.

%In many prediction problems the agent, after taking an action, is able to estimate his reward but does not have access to an other information. Such prediction problems have been known as multi-armed bandit problems \cite{Rob52}. At each time step, a unit resource is allocated to an action and some observable payoff is obtained. The goal is to maximize the total payoff obtained in a sequence of allocations.

%Many situations involve repeatedly making decisions in an uncertain environment. In the literature, some algorithms that minimize an even stronger form of regret, known as internal or swap regret have been considered to compute some equilibrium. Internal regret is important because when all players in a game minimize this stronger type of regret, the empirical distribution of play is known to converge to correlated equilibrium \cite{HMC00}. For several classical games, several works show that if all participants minimize their own external regret, then overall traffic is guaranteed to converge to an approximate Nash Equilibrium.

%The speed at which the game outcome converges to this approximately optimal welfare is governed by individual players? regret bounds. There is a large number of simple regret minimization algorithms (multiplicative weights \cite{LW94}, mirror decent, follow the regularized leader \cite{FV97}, see e.g. \cite{Haz16}) guaranteeing that the average regret goes down as O(1/ ?T) with time T, which is tight in adver- sarial settings.

\paragraph{Organization of the paper.} 
Section~\ref{sec:prelims} provides some definitions and preliminaries used in the rest of the paper.
In Section~\ref{sec:HEDGE}, we describe the HEDGE algorithm and the players' feedback and information models.
Section~\ref{sec:results} is split into two parts:
The first has the statement of our results (convergence and rate of convergence), while the second one contains the mathematical apparaturs required to prove these results.
Section \ref{sec:conclusion} concludes while some technical details have been relegated to Appendix \ref{app:auxiliary}.

%----------------------------------------------------------------------
%%% PRELIMINARIES
%----------------------------------------------------------------------
\section{Preliminaries}
\label{sec:prelims}
%----------------------------------------------------------------------
%%% PRELIMINARIES
%----------------------------------------------------------------------
% !TEX root = ./WINE.tex

We begin with some basic definitions from game theory.
Throughout the paper, we focus on games that are played by a (finite) set $\players = \{1,\dotsc,\nPlayers\}$ of $\nPlayers$ \emph{players} (or \emph{agents}).
Each player $\play\in\players$ is assumed to have a finite set of \emph{actions} (or \emph{pure strategies}) $\pures_{\play}$, and the players' preferences for one action over another are represented by each action's \emph{utility} (or \emph{payoff}).
Specifically, as players interact with each other, the individual payoff of each player is given by a function $\pay_{\play}\from \pures\equiv \prod_{\play} \pures_{\play} \to \R$ of all players' actions, and each agent seeks to maximize the utility $\pay_{\play}(\pure_{\play};\pure_{-\play})$ of his chosen action $\pure_{\play}\in\pures_{\play}$ against the action profile $\pure_{-\play}$ of his opponents.%
\footnote{In the above $(\pure_{\play};\pure_{-\play})$ is shorthand for $(\pure_{1},\dotsc,\pure_{\play},\dotsc,\pure_{\nPlayers})$, used here to highlight the action of player $\play$ against that of all other players.}
%Let $\pay_{\play}(\pure_{1},\dotsc,\pure_{\nPlayers})$ denote the payoff of player $\play$ in the pure strategy profile $(\pure_{1},\dotsc,\pure_{\nPlayers})$ . 
%The \emph{maximum absolute payoff} will be denoted as $L = \max_{\play\in\players} \max_{x\in\strats} \pay_{\play}(x)$.
%Obviously $L<\infty$.
%Putting all this together, a finite game in normal form will be writen as a tuple $\game \equiv \gamefull$ with players, pure strategies and payoff functions defined as above.

A player can also use a \emph{mixed strategy} by playing a probability distribution $x_{\play}=(x_{\play\pure_{\play}} )_{\pure_{\play}\in \pures_{\play}} \in \simplex(\pures_{\play})$ over their action set $\pures_{\play}$.
The resulting probability vector $x_{\play}$ is called the \emph{mixed strategy} of the $\play$-th player and the set $\strats_{\play} = \simplex(\pures_{\play})$ is  the corresponding mixed strategy space of player $\play$.
Based on this, we write $\strats = \prod_{\play} \strats_{\play}$ for the game's \emph{strategy space}, i.e. the space of all mixed strategy profiles $x=(x_{\play})_{\play\in\players}$.

In this context (and in a slight abuse of notation), the expected payoff of the $\play$-th player in the mixed strategy profile $x = (x_{1},\dotsc,x_{\nPlayers})$ is
\begin{equation}
\pay_{\play}(x)
	= \sum_{\mathclap{\pure_{1}\in\pures_{1}}} \:\dotsi\: \sum_{\mathclap{\pure_{\nPlayers}\in\pures_{\nPlayers}}} \;
	\pay_{\play}(\pure_{1},\dotsc,\pure_{\nPlayers})\,
	x_{1\pure_{1}} \dotsm x_{\nPlayers\pure_{\nPlayers}},
\end{equation}
Accordingly, if player $\play$ plays the pure strategy $\pure_{\play}$ in $\pures_{\play}$, we will write
\begin{equation}
\payv_{\play\pure_{\play}}(x)
	= \pay_{\play}(\pure_{\play};x_{-\play})
	= \pay_{\play}(x_{1},\dotsc,\pure_{\play},\dotsc, x_{\nPlayers})
\end{equation}
for the payoff corresponding to the pure strategy $\pure_{\play}\in\pures_{\play}$ and $\payv_{\play}(x) = (\payv_{\play\pure_{\play}}(x))_{\pure_{\play}\in\pures_{\play}}$ for the \emph{payoff vector} of player $\play$.
A player's expected payoff can thus be written as
\begin{equation}
\label{eq:braket}
\pay_{\play}(x)
	= \sum_{\pure_{\play}\in\pures_{\play}} x_{\play\pure_{\play}} \payv_{\play\pure_{\play}}(x)
	= \braket{\payv_{\play}(x)}{x_{\play}},
\end{equation}
where $\braket{\payv_{\play}}{x_{\play}}$ denotes the canonical bilinear pairing between $\payv_{\play}$ and $x_{\play}$.

The most widely used solution concept in game theory is that of a \acdef{NE}, i.e. a state $\eq\in\strats$ that is unilaterally stable in the sense that
\begin{equation}
\label{eq:NE}
\tag{NE}
\pay_{\play}(\eq_{\play};\eq_{-\play})
	\geq \pay_{\play}(x_{\play};\eq_{-\play})
	\quad
	\text{for all $x_{\play}\in\strats_{\play}$, $\play\in\players$},
\end{equation}
or, equivalently, writing $\supp(x)$ for the support of $x$:
\begin{equation}
\label{eq:NE-pures}
\payv_{\play\pure_{\play}}(\eq)
	\geq \payv_{\play\pure_{\play}'}(\eq)
	\quad
	\text{for all $\pure_{\play}\in\supp(\eq_{\play})$ and all $\pure_{\play}'\in\pures_{\play}$, $\play\in\players$}.
\end{equation}
If $\eq$ is \emph{pure} (i.e. $\supp(\eq_{\play}) = \{\peq_{\play}\}$ for some $\peq_{\play}\in\pures_{\play}$ and all $\play\in\players$), then it is called a \emph{pure equilibrium}.
In addition, $\eq$ is said to be \emph{strict} if $\eq$ is pure and
\eqref{eq:NE-pures} holds as a strict inequality for all $\pure_{\play}'\notin\supp(\eq_{\play})$, $\play\in\players$.
Equivalently, $\eq$ is a strict equilibrium if every player has a \emph{unique} best response to their opponents' strategy profile.

In generic games (i.e. games with no payoff ties), pure equilibria are also strict, so our analysis will focus throughout on strict \aclp{NE}.
%Of course, \aclp{NE} in pure strategies do not always exist (implying the same for strict equilibria as well), but they do exist in many important classes of games, ranging from the prisoner's dilemma and its $\nPlayers$-player extensions to coordination/anti-coordination games, competition games, potential/congestion games, etc.
With this in mind, we derive here a variational characterization of strict equilibria that plays a key role in our analysis:

\begin{proposition}
\label{prop:strict-var}
The profile $\eq$ is a strict equilibrium if and only if
%there exists a neighborhood $U$ of $\eq$ in $\strats$ and some real number $m>0$ such that
\begin{equation}
\label{eq:strict-var}
\braket{\payv(x)  }{x - \eq}
	\leq - \tfrac{1}{2} a \norm{x - \eq}
	\quad
%	\text{for all $x\in U$},
	\text{for some $a>0$ and for all $x$ near $\eq$},
\end{equation}
where $\norm{x} = \sum_{\play} \sum_{\pure\in\pures_{\play}} \abs{x_{\play\pure_{\play}}}$ denotes the $L^{1}$-norm of $x$.
%where $\braket{\payv(x)}{x - \eq} = \sum_{\play} \braket{\payv_{\play}(x)}{x_{\play} - \eq_{\play}}$ is the canonical  inner product  between vectors. Here, $\norm{x} = \sum_{k} \abs{x_{k}}$ denotes the $L^{1}$-norm of $x$.
\end{proposition}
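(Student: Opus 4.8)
The plan is to prove both implications by decomposing the pairing player by player as $\braket{\payv(x)}{x - \eq} = \sum_\play \braket{\payv_\play(x)}{x_\play - \eq_\play}$ and exploiting, near a pure profile, the simplex constraint $\sum_{\pure_\play} x_{\play\pure_\play} = 1$. When $\eq$ is pure, write $\supp(\eq_\play) = \{\peq_\play\}$; the algebraic core is the identity
\begin{equation*}
\braket{\payv_\play(x)}{x_\play - \eq_\play}
  = -\sum_{\pure_\play \neq \peq_\play} \bigl[\payv_{\play\peq_\play}(x) - \payv_{\play\pure_\play}(x)\bigr]\, x_{\play\pure_\play},
\end{equation*}
obtained by substituting $x_{\play\peq_\play} - 1 = -\sum_{\pure_\play \neq \peq_\play} x_{\play\pure_\play}$, together with the elementary fact $\norm{x_\play - \eq_\play} = 2\sum_{\pure_\play \neq \peq_\play} x_{\play\pure_\play}$.

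For the forward implication I would set $a_0 = \min_\play \min_{\pure_\play \neq \peq_\play} \bigl[\payv_{\play\peq_\play}(\eq) - \payv_{\play\pure_\play}(\eq)\bigr]$, which is strictly positive exactly because $\eq$ is strict. Continuity of the (multilinear) payoff functions then yields a neighborhood of $\eq$ on which $\payv_{\play\peq_\play}(x) - \payv_{\play\pure_\play}(x) \geq a$ for any fixed $a < a_0$ and all $\pure_\play \neq \peq_\play$, $\play \in \players$. Inserting this into the identity gives $\braket{\payv_\play(x)}{x_\play - \eq_\play} \leq -a\sum_{\pure_\play \neq \peq_\play} x_{\play\pure_\play} = -\tfrac{1}{2}a\norm{x_\play - \eq_\play}$, and summing over $\play$ produces exactly \eqref{eq:strict-var}.

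For the converse I would test \eqref{eq:strict-var} against one-player perturbations $x = (x_\play; \eq_{-\play})$. Since $x_\playalt = \eq_\playalt$ for $\playalt \neq \play$, the cross terms vanish and $\braket{\payv(x)}{x - \eq} = \braket{\payv_\play(x)}{x_\play - \eq_\play}$; moreover $\payv_{\play\pure_\play}(x_\play; \eq_{-\play}) = \pay_\play(\pure_\play; \eq_{-\play}) = \payv_{\play\pure_\play}(\eq)$ is independent of $x_\play$, so the left-hand side is the \emph{linear} functional $x_\play \mapsto \braket{\payv_\play(\eq)}{x_\play - \eq_\play}$ while the right-hand side is $-\tfrac{1}{2}a\norm{x_\play - \eq_\play}$. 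First I would show $\eq_\play$ is pure: were $\supp(\eq_\play)$ to contain two or more actions, there would be a tangent direction $d_\play$ (with $\sum_{\pure_\play} d_{\play\pure_\play} = 0$) admitting $\eq_\play \pm t\,d_\play \in \simplex(\pures_\play)$ for small $t > 0$; applying \eqref{eq:strict-var} with both signs and adding yields $0 \leq -a\,t\,\norm{d_\play} < 0$, a contradiction. Hence each $\eq_\play$ is a vertex $\peq_\play$. Strictness then follows by taking $x_\play = \eq_\play + t(e_{\pure_\play} - e_{\peq_\play})$ for $\pure_\play \neq \peq_\play$ and small $t > 0$: here $\norm{x_\play - \eq_\play} = 2t$, and after dividing by $t$ the inequality reduces to $\payv_{\play\pure_\play}(\eq) - \payv_{\play\peq_\play}(\eq) \leq -a < 0$, which is precisely the strict best-response condition \eqref{eq:NE-pures}.

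The main obstacle I anticipate is the converse, and specifically the purity step: the clean way to exclude a genuinely mixed $\eq$ is the two-sided perturbation above, which works only because the left-hand side is linear in the perturbation direction while the right-hand side is an even ($L^1$) function of it, making the two one-sided inequalities jointly impossible. Everything else is bookkeeping with the simplex constraint; the only points needing minor care are that all perturbations used in the converse have magnitude $O(t)$, so they lie in the prescribed neighborhood of $\eq$ for $t$ small, and that the constant $a$ delivered by the forward direction is uniform over players and deviations.
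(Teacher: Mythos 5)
Your proof is correct and follows essentially the same route as the paper's: the forward direction is the identical player-by-player decomposition $\braket{\payv_{\play}(x)}{x_{\play}-\eq_{\play}} = \sum_{\pure_{\play}\neq\peq_{\play}} x_{\play\pure_{\play}}\bigl[\payv_{\play\pure_{\play}}(x) - \payv_{\play\peq_{\play}}(x)\bigr]$ plus continuity, and the converse rests on the same unilateral edge perturbations $x = (x_{\play};\eq_{-\play})$ that render the pairing linear. The only difference is organizational: you argue directly, establishing purity (via the two-sided perturbation trick) and then strictness, whereas the paper runs a single contrapositive starting from the weak-inequality negation of strictness; the underlying perturbation mechanism is the same.
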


Motivated by this characterization of strict equilibria (which we prove in Appendix \ref{app:auxiliary}), we say that $\eq$ is a \emph{globally strict} equilibrium if \eqref{eq:strict-var} holds for all $x\in\strats$ \textendash\ for instance, as is easily seen to be the case in the Prisoner's Dilemma.
%generic competition games, potential games with a unique equilibrium, etc.
Obviously, if $\eq$ is globally strict, then it is the unique equilibrium of the game, similarly to the notion of a \ac{GESS} in evolutionary game theory \cite{San10}.

\section{Learning via hedging}
\label{sec:HEDGE}
%----------------------------------------------------------------------
%%% LEARNING
%----------------------------------------------------------------------
% !TEX root = ./WINE.tex

The algorithm that we examine is the so-called ``HEDGE'' variant of the \acl{EW} algorithm \cite{FS99}.
In a nutshell, the main idea of the algorithm is as follows:
At each stage $t=1,2,\dotsc$ of the process, players maintain and update a ``performance score'' for each of their actions (pure strategies) based on each action's cumulative payoff up to stage $t$.
These scores are then converted to mixed strategies by assigning exponentially higher probability to actions with higher scores;
subsequently, a new action is drawn based on these mixed strategies, and the process repeats.
More precisely, we have the following iterative algorithm:
%It starts with some initial mixed strategies $x$ which each player $\play$ uses $x_{\play}(0)$ for the first round.
%After each round $t$, each player $\play$ compute a new  mixed strategy $x_{\play}(t+1)$ according its observations and $x_{\play}(t)$.
%The notation  ``$\pure\sim x$'' means that an action $\pure$ is chosen based on the distribution $x$. More formally,

\medskip

\begin{algorithm}{HEDGE with variable step-size $\step_{t}$}
\smallskip
%		\label{exp3}
%	    \\$\textbf{Input:}$Let $\eta \in \mathbb{R}^{+}$ and $\gamma$, $\beta$  in $(0,1]$  	  \\
%	    $\textbf{Initialization}$ : Let $p_{1}$ be the uniform distribution over $\{1,...,K\}$ 
 %\\  $\gamma = min \{1,\sqrt{ \frac{K ln K}{(e-1)g}}\}$, with $g\geq G_{max}$
\\
Each player $\play \in \players$ has an
initial score vector $y_{\play}$ and plays with initial mixed strategy $x_{\play} = \logit_{\play}(y_{\play})$
where the logit map $\logit_{\play}$ is defined as
\begin{equation}
\label{eq:logit}
\logit_{\play}(y_{\play})
	= \frac{1}{\sum_{\pure\in\pures_{\play}} \exp(y_{\play\pure})}
	(\exp(y_{\play\pure}))_{\pure\in\pures_{\play}}.
\end{equation}
% initial mixed strategy $x_{\play}$ and a score vector $y_{\play}$.
% such that all its elements are set to $1$.
     \\   \For{each round $t$  }
\>
	\\   Each player $\play \in \players$ draws a pure strategy  $\pure_{\play}$ according to $x_{\play}$
%                    (\emph{i.e.} $\pure_{\play}(t)\sim x_{\play}(t ))$ 
                  \\      Each player  $\play$ observes their individual payoff vector $\payv_{i}(x)$
%                  perfect knowledge of the payoff vector $\payv_{\play}(x)$ given the players' mixed strategies.
		 \\    Each player $\play$  updates their mixed strategy $x_{\play}$ via the recursion
\<
\begin{equation}
\label{eq:HEDGE}
\tag{HEDGE}
\begin{aligned}
y_{\play}
	&\leftarrow y_{\play} + \step_{t} \payv_{\play}(x),
	\\
x_{\play}
	&\leftarrow \logit_{\play}(y_{\play})
\end{aligned}
\end{equation}
%\<
\End\For
\end{algorithm}

\medskip
 
As stated above,  \eqref{eq:HEDGE} tacitly assumes that players have perfect knowledge of their \emph{mixed} payoff vectors $\payv_{\play}(x(t))$ at each iteration of the algorithm.
However, in practical applications of game theory \textendash\ especially in large networks and telecommunication systems \textendash\ this assumption is often too stringent.
For this reason, much of our analysis will concern the case where players only have access to a possibly imperfect estimate $\hat\payv(t)$ of their \emph{pure} payoff vector $\payv_{\play}(\pure(t)) \equiv (\pay_{\play}(\pure;\pure_{-\play}(t))_{\pure\in\pures_{\play}}$ given the pure strategy profile $\pure(t)\in\pures$ drawn at stage $t$.
%based on the players' mixed strategy profile $x(t)\in\strats$.
In other words, we will be interested in the case where players can only estimate the payoff of their pure strategies given the chosen actions of all other players.

Formally, this can be represented by the general feedback model
\begin{equation}
\label{eq:payv-noise}
\hat\payv_{\play}(t)
	= \payv_{\play}(\pure(t)) + \noise_{\play}(t),
\end{equation}
where the error process $\noise = (\noise_{\play})_{\play\in\players}$ satisfies the statistical hypotheses
%conditioned on the history $\filter_{t}$ of $x(t)$ up to stage $t$:%
%\footnote{Formally, $\filter_{t}$ stands here for the natural filtration of $x(t)$;
%as a result, $x(t)$ is adapted to $\filter_{t}$ but $\noise_{t}$ is not (note that the natural filtrations of $x_{t}$ and $\noise_{t}$ differ by a stage).}
\begin{enumerate}
[\indent 1.]
%[({H}1)]
\item
\emph{Zero-mean:}
\begin{alignat*}{2}
\tag{H1}
\label{eq:zeromean}
\exof*{\noise(t) \given \filter_{t}}
	&= 0
	&\quad
	\text{for all $t=1, 2,\dotsc$ \as.}
%\end{equation}
\intertext{%
\item
\emph{Finite \acl{MSE}:}
there exists some $\noisedev>0$ such that}
%\begin{equation}
\tag{H2}
\label{eq:MSE}
\txs
\exof{\dnorm{\noise(t)}^{2} \: \given \filter_{t}}
	&\leq \noisevar
	&\quad
	\text{for all $t=1, 2,\dotsc$ \as.}
%\end{equation}
\end{alignat*}
\end{enumerate}
In the above, the expectation $\exof{\argdot}$ is taken with respect to the randomness induced by the players' mixed strategies and the error process $\noise$, while $\filter_{t}$ denotes the history of $x(t)$ up to stage $t$.%
\footnote{Formally, $\filter_{t}$ is defined as the natural filtration induced by $x(t)$ \cite{HH80}.}
Put differently, Hypotheses \eqref{eq:zeromean} and \eqref{eq:MSE} simply mean that the players' estimates $\hat\payv_{\play}$ are \emph{conditionally unbiased and bounded in mean square}, i.e.
\begin{subequations}
\label{eq:estimates}
\begin{flalign}
\label{eq:unbiased}
&\exof{\hat\payv(t) \given \filter_{t}}
	= \payv(x(t)),
	\\
\label{eq:vbound}
&\exof{ \dnorm{\hat\payv(t)}^{2} \given \filter_{t} }
	\leq \vbound^{2},
%	< +\infty
%	\quad
%	\text{for some $\vbound < \infty$}.
\end{flalign}
\end{subequations}
where $\vbound>0$ is a finite positive constant (in the noiseless case, $\vbound$ is simply the players' maximum absolute payoff).
Thus, Hypotheses \eqref{eq:zeromean} and \eqref{eq:MSE} allow for a broad range of noise distributions, including all compactly supported, \mbox{(sub-)}Gaussian, (sub-)ex\-po\-nen\-tial and log-normal distributions.
%\footnote{In particular, we will not be assuming \ac{iid} errors;
%this point is crucial for applications to practical systems where measurements are often correlated with the state of the system.}

%----------------------------------------------------------------------
%%% RESULTS
%----------------------------------------------------------------------
\section{Analysis and results}
\label{sec:results}
%----------------------------------------------------------------------
%%% RESULTS
%----------------------------------------------------------------------
% !TEX root = ./WINE.tex

%----------------------------------------------------------------------
%%% RESULTS
%----------------------------------------------------------------------
\subsection{Statement of the results}
\label{sec:statements}

In this section, we provide our main convergence results for the algorithm \eqref{eq:HEDGE}.
%with both perfect and imperfect information.
%(following the general feedback model outlined in the previous section).
For simplicity, we start with the perfect information case:%
%\footnote{To maintain the flow of the paper, we postpone all proofs to the end of this section.}

\begin{theorem}
\label{thm:conv-det}
Suppose that \eqref{eq:HEDGE} is run with perfect mixed payoff observations and is initialized not too far from $\eq$.
Then, we have
\begin{equation}
\label{eq:rate-det}
\norm{x(t) - \eq}
	\leq C e^{-a \sum_{j=1}^{t}\step_{j}},
\end{equation}
where $C > 0$ is a constant that depends on the initialization of \eqref{eq:HEDGE} and $a>0$ is a constant that only depends on the game.
In particular, if $\sum_{t=1}^{\infty} \step_{t}=\infty$, $x(t)$ converges to $\eq$.
%and a step-size of the form $\step_{t}\propto1/t$, we have $\norm{x(t) - \eq} = \bigoh(1/t)$.
\end{theorem}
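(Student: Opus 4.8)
The plan is to bypass any energy/Lyapunov function and instead track the dynamics directly through the players' \emph{score differences}. Since $\eq$ is strict, each player $\play$ has a unique best response $\peq_\play$, so that $\supp(\eq_\play) = \{\peq_\play\}$; for every off-equilibrium action $\pure\neq\peq_\play$ I set $z_{\play\pure}(t) = y_{\play\pure}(t) - y_{\play\peq_\play}(t)$. The logit map \eqref{eq:logit} gives $x_{\play\pure} = x_{\play\peq_\play}\exp(z_{\play\pure}) \leq \exp(z_{\play\pure})$, and since $\norm{x-\eq} = 2\sum_{\play}\sum_{\pure\neq\peq_\play} x_{\play\pure}$, it suffices to show that each $z_{\play\pure}(t)$ drifts to $-\infty$ at a rate proportional to $\sum_j\step_j$.

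First I would isolate the driving force. Strictness means \eqref{eq:NE-pures} holds as a \emph{strict} inequality off the support, so the payoff gap $m := \min_{\play}\min_{\pure\neq\peq_\play}\bigl(\payv_{\play\peq_\play}(\eq) - \payv_{\play\pure}(\eq)\bigr)$ is strictly positive and depends only on the game. As each payoff vector $\payv_\play(\cdot)$ is multilinear, hence continuous, there is a neighborhood $U$ of $\eq$ on which $\payv_{\play\peq_\play}(x) - \payv_{\play\pure}(x) \geq m/2$ for all $\play$ and all $\pure\neq\peq_\play$. Subtracting the $\peq_\play$-th component of the score update in \eqref{eq:HEDGE} then gives the one-sided recursion
\[
z_{\play\pure}(t+1) = z_{\play\pure}(t) + \step_t\bigl(\payv_{\play\pure}(x(t)) - \payv_{\play\peq_\play}(x(t))\bigr) \leq z_{\play\pure}(t) - \tfrac12 m\,\step_t,
\]
valid at every stage $t$ for which $x(t)\in U$.

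The crux is then a trapping (invariance) argument, which I expect to be the main obstacle: the decay above is only available while the trajectory remains in $U$, and since $\step_t$ is not assumed small, a single update could \emph{a priori} leave $U$, so I cannot invoke continuity of a flow. I would close this by induction on $t$. Assuming $x(1),\dots,x(t)\in U$, summing the recursion yields $z_{\play\pure}(t+1) \leq z_{\play\pure}(1) - \tfrac12 m\sum_{j\leq t}\step_j \leq z_{\play\pure}(1)$, whence
\[
\txs\sum_{\play}\sum_{\pure\neq\peq_\play} x_{\play\pure}(t+1) \leq \sum_{\play}\sum_{\pure\neq\peq_\play}\exp\bigl(z_{\play\pure}(1)\bigr) =: W_0 .
\]
If \eqref{eq:HEDGE} is initialized close enough to $\eq$ that $W_0$ is smaller than the off-equilibrium radius of $U$ (precisely, small enough that $\sum_{\play}\sum_{\pure\neq\peq_\play} x_{\play\pure} \leq W_0$ forces $x\in U$ — this is the exact content of ``initialized not too far from $\eq$''), then $x(t+1)\in U$ as well and the induction never breaks.

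With confinement secured, the recursion holds for all $t$, so $z_{\play\pure}(t)\leq z_{\play\pure}(1) - \tfrac12 m\sum_{j<t}\step_j$ and therefore
\[
\norm{x(t)-\eq} = 2\sum_{\play}\sum_{\pure\neq\peq_\play} x_{\play\pure}(t) \leq 2 W_0\, \exp\bigl(-\tfrac12 m\txs\sum_{j<t}\step_j\bigr),
\]
which is \eqref{eq:rate-det} with $a = m/2$ and $C = 2W_0$ (the single-step offset in the exponent being harmlessly absorbed into $C$ for bounded step-sizes); divergence of $\sum_t\step_t$ then forces $x(t)\to\eq$. For contrast, the Lyapunov route suggested by Proposition~\ref{prop:strict-var} — monitoring the Fenchel coupling/relative entropy $\dkl(\eq\,\|\,x)$ and using the variational inequality \eqref{eq:strict-var} — looks tempting, but the standard smoothness estimate for the logit map contributes an $\bigoh(\step_t^2)$ floor that would only yield convergence to an $\bigoh(\step_t)$-neighborhood; recovering the exact rate that way requires the extra observation that the Hessian of the log-sum-exp degenerates at the vertex $\eq$, which the direct score-tracking argument avoids altogether.
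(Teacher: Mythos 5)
Your proof is correct and takes essentially the same route as the paper's: both track the score differences $z_{\play\pure} = y_{\play\pure} - y_{\play\peq_{\play}}$, use strictness plus continuity to get a uniform payoff gap on a neighborhood $U$ of $\eq$, close a forward-invariance induction so the trajectory never leaves $U$, and then telescope and push the resulting bound through the logit map to get the exponential rate. The only (cosmetic) difference is that the paper formalizes the trapping region via the neighborhood $U_{M} = \setdef{x=\logit(y)}{z_{\play\pure_{\play}}\leq -M}$ of Proposition~\ref{prop:Fenchel}, whereas you use the elementary bound $x_{\play\pure}\leq e^{z_{\play\pure}}$ together with an $L^{1}$-ball inclusion — the same idea, made self-contained.
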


\begin{corollary}
\label{cor:rate-det-opt}
If the algorithm \eqref{eq:HEDGE}  is run with assumptions as above and a constant step-size $\step$, we have
\begin{equation}
\label{eq:rate-det-opt}
\norm{x(t) - \eq}
	= \bigoh(e^{-a\step t}).
\end{equation}
\end{corollary}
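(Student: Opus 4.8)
The plan is to obtain the statement as a direct specialization of Theorem~\ref{thm:conv-det}, since the only difference between the two is that the general step-size sequence is replaced by a constant one. First I would invoke Theorem~\ref{thm:conv-det}: because its hypotheses (perfect mixed payoff observations and an initialization not too far from $\eq$) are exactly the ones assumed here, the bound
\[
\norm{x(t) - \eq} \leq C e^{-a \sum_{j=1}^{t}\step_{j}}
\]
holds verbatim, with $a>0$ depending only on the game and $C>0$ depending only on the initialization of \eqref{eq:HEDGE}.

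The second step is purely arithmetic. For a constant step-size $\step_{j} = \step$ for every $j$, the exponent collapses to $\sum_{j=1}^{t}\step_{j} = \step\, t$, so the estimate above reads $\norm{x(t)-\eq} \leq C e^{-a\step t}$. Since neither $C$ nor $a$ depends on $t$, and $a\step>0$ is a fixed positive constant, this is precisely a bound of the form $\bigoh(e^{-a\step t})$, which is the claim.

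The point worth emphasizing — and the reason the corollary deserves a separate statement — is qualitative rather than technical: with a constant step-size the decay becomes genuinely geometric, $\norm{x(t)-\eq} = \bigoh(\lambda^{t})$ with contraction factor $\lambda = e^{-a\step}\in(0,1)$, whereas for a vanishing schedule $\step_{t}\to 0$ the aggregate $\sum_{j=1}^{t}\step_{j}$ may grow only sublinearly, so the rate in \eqref{eq:rate-det} degrades accordingly. I do not expect any genuine obstacle here, since all of the analytic work lives in Theorem~\ref{thm:conv-det} and the corollary merely evaluates its exponent along the constant sequence. The only thing to verify is that a constant step-size is an admissible choice, i.e. that it is compatible with the initialization-dependent neighborhood; this is immediate, as Theorem~\ref{thm:conv-det} imposes no upper bound on the individual step-sizes, and in particular $\sum_{t=1}^{\infty}\step = \infty$ already guarantees convergence via the theorem's final assertion.
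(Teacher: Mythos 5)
Your proposal is correct and coincides with the paper's own (implicit) argument: the corollary is a direct specialization of Theorem~\ref{thm:conv-det}, obtained by substituting $\step_{j} = \step$ so that $\sum_{j=1}^{t}\step_{j} = \step t$ in the bound \eqref{eq:rate-det}. Your observation that no upper bound on $\step$ is needed is also consistent with the paper's remark that, in the noiseless case, larger step-sizes only improve the rate.
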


\begin{remark}
The locality of Theorem \ref{thm:conv-det} has to do with the fact that a game may admit several strict equilibria, so the algorithm's end state depends on its initialization.
Instead,
%if the game admits a unique, globally strict equilibrium $\eq$, \eqref{eq:HEDGE} converges to $\eq$ from any initialization.
if $\eq$ is globally strict (meaning that the game admits a unique \acl{NE}), the above results hold globally and there is no dependence on the algorithm's initialization (for a precise statement, see Theorem \ref{thm:conv-global} below).
\end{remark}

\begin{remark}
We should also note here that the convergence rate \eqref{eq:rate-det} \emph{improves} with larger step-sizes $\step_{t}$.
The reason for this (fairly surprising) behavior is that, when there are no estimation errors, the algorithm consistently reinforces the players' equilibrium strategies near a strict equilibrium.
As a result, in the absence of uncertainty, players can employ \eqref{eq:HEDGE} in a very greedy fashion and achieve arbitrarily fast convergence rates \textendash\ in stark contrast to standard results in game-theoretic learning and convex optimization which often require a small, decreasing step-size.
We show below that this property is inextricably tied to the absence of uncertainty:
if the players' observations are affected by even a modicum of randomness, it is necessary to use a more conservative step-size policy (cf. Theorem \ref{thm:conv-stoch}).
\end{remark}

\begin{remark}
Regarding the game's dimensionality (i.e. the number of players and actions per player), it can be shown that $C = \bigoh(\sum_{\play\in\players} \abs{\pures_{\play}})$ while $a$ depends only on the relative differences between the players' payoffs \textendash\ specifically, we can take $a = \min_{\play}\min_{\pure_{\play}\neq\peq_{\play}} \left[ \pay_{\play}(\peq) - \pay_{\play}(\pure_{\play};\peq_{-\play}) \right] > 0$.
In other words, the algorithm's half-life is asymptotically \emph{independent} of the size of the game.
%In other words, the algorithm's convergence rate does not depend on the number of players or pure strategies per player, but only on the game's relative payoff differences.
\end{remark}

The basic ingredient of the proof of Theorem \ref{thm:conv-det} (presented at the end of this section) is as follows.
First, assuming the algorithm starts relatively close to a given strict equilibrium, we show that the induced sequence of play always remains nearby.
Then, by studying the evolution of the players' score variables $y(t)$, we show that the cumulative payoff difference between a player's equilibrium strategy and all other pure strategies grows asymptotically as $\bigoh(\sum_{j=1}^{t} \step_{j})$ for large $t$.
%$\bigoh(\theta_{t})$, where $\theta_{t} = \sum_{j=1}^{t} \step_{j}$ is the cumulative length of the player's steps up to iteration $t$.
The derived exponential rate is then a consequence of the properties of the logit map $\logit$.

On the other hand, if the players' payoff observations are subject to noise and stochastic uncertainty, a single unlucky estimation could drive $x(t)$ away from the basin of a strict equilibrium, possibly never to return.
As a result, any local convergence result in the presence of noise must be probabilistic in nature.
This is emphasized in our next result which shows that convergence can be achieved with probability arbitrarily close to $1$:

\begin{theorem}
\label{thm:conv-stoch}
Fix a confidence level $\eps>0$ and suppose that the algorithm \eqref{eq:HEDGE} is run with
a small enough step-size $\step_{t}$ satisfying $\sum_{t=1}^{\infty} \step_{t}^{2} < \sum_{t=1}^{\infty} \step_{t} = \infty$
and
imperfect pure payoff information satisfying Hypotheses \eqref{eq:zeromean} and \eqref{eq:MSE}.
If $\eq$ is a strict equilibrium and \eqref{eq:HEDGE} is initialized not too far from $\eq$, we have
\begin{equation}
\label{eq:rate-stoch}
\probof*{\norm{x(t) - \eq} \leq C' e^{-a\sum_{j=1}^{t}\step_{j}} \, \text{for all $t$}}
	\geq 1-\eps,
\end{equation}
where
$a>0$ is a constant that only depends on the game
and
$C'>0$ is a \textpar{random} constant that depends on the initialization of \eqref{eq:HEDGE}.
In particular, under the stated assumptions, $x(t)\to\eq$ with probability at least $1-\eps$.
\end{theorem}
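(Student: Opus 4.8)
The plan is to reduce the claim to controlling, for each player $\play\in\players$ and each non-equilibrium pure strategy $\pure\neq\peq_{\play}$, the \emph{score difference} $z_{\play\pure}(t) = y_{\play\peq_{\play}}(t) - y_{\play\pure}(t)$. By the definition of the logit map in \eqref{eq:logit} we have $x_{\play\pure}(t)\le e^{-z_{\play\pure}(t)}$, and since $\eq$ is the pure profile supported on the $\peq_{\play}$'s,
\[
\norm{x(t)-\eq}
	= 2\sum_{\play\in\players}\sum_{\pure\neq\peq_{\play}} x_{\play\pure}(t)
	\le 2\sum_{\play\in\players}\sum_{\pure\neq\peq_{\play}} e^{-z_{\play\pure}(t)}.
\]
It therefore suffices to bound each $z_{\play\pure}(t)$ from below by $a\sum_{j=1}^{t}\step_{j}$ minus an almost surely finite random constant. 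Writing the effective error as $U(t)=\hat\payv(t)-\payv(x(t))$, Hypotheses \eqref{eq:zeromean} and \eqref{eq:MSE} together with the unbiasedness relation \eqref{eq:unbiased} show that $U$ is a martingale difference sequence adapted to $\filter_{t}$ with conditionally bounded second moment. Substituting $\hat\payv = \payv(x) + U$ into the \eqref{eq:HEDGE} recursion splits the increment of $z_{\play\pure}$ into a mean-field drift plus a noise term,
\[
z_{\play\pure}(t+1)
	= z_{\play\pure}(t)
	+ \step_{t}\bigl[\payv_{\play\peq_{\play}}(x(t)) - \payv_{\play\pure}(x(t))\bigr]
	+ \step_{t}\bigl[U_{\play\peq_{\play}}(t) - U_{\play\pure}(t)\bigr],
\]
which is exactly the deterministic backbone used for Theorem \ref{thm:conv-det}, now perturbed by a martingale.

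Next I would quantify the drift. Since $\eq$ is strict, Proposition \ref{prop:strict-var} — equivalently, the positivity of the equilibrium payoff gap at a strict vertex — furnishes a neighborhood $\mathcal{U}$ of $\eq$ and a constant $a>0$ such that $\payv_{\play\peq_{\play}}(x) - \payv_{\play\pure}(x) \ge a$ for every $x\in\mathcal{U}$, every $\play$, and every $\pure\neq\peq_{\play}$. Consequently, \emph{as long as the iterate stays in $\mathcal{U}$}, the drift pushes each $z_{\play\pure}$ upward at rate at least $a\step_{t}$, and telescoping yields $z_{\play\pure}(t)\ge z_{\play\pure}(1) + a\sum_{j=1}^{t-1}\step_{j} + M_{\play\pure}(t-1)$, where $M_{\play\pure}(t)=\sum_{j=1}^{t}\step_{j}[U_{\play\peq_{\play}}(j)-U_{\play\pure}(j)]$.

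The noise is then controlled by the summability of the step-size. The increments of $M_{\play\pure}$ have conditional variance $\bigoh(\step_{j}^{2})$, so $\sum_{j}\step_{j}^{2}<\infty$ makes each $M_{\play\pure}$ an $L^{2}$-bounded martingale, and Doob's maximal inequality gives $\probof*{\sup_{t}\abs{M_{\play\pure}(t)}\ge\lambda}\le\bigoh\bigl(\sum_{j}\step_{j}^{2}\bigr)/\lambda^{2}$. A union bound over the finitely many pairs $(\play,\pure)$, together with a step-size taken small enough that $\sum_{j}\step_{j}^{2}$ is suitably small relative to $\eps\lambda^{2}$, produces an event $E$ with $\probof{E}\ge 1-\eps$ on which $\sup_{t}\abs{M_{\play\pure}(t)}<\lambda$ for all $(\play,\pure)$; here $\lambda$ is fixed as a fraction of the margin afforded by $\mathcal{U}$ and the initialization.

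The remaining — and most delicate — step is to close the loop between ``the iterate stays in $\mathcal{U}$'' (needed for the favorable drift) and ``the drift keeps the iterate in $\mathcal{U}$.'' I would make this precise through the stopping time $\tau=\inf\{t:x(t)\notin\mathcal{U}\}$: the drift bound above is valid for $t\le\tau$, so on $E$ one has $z_{\play\pure}(t)\ge z_{\play\pure}(1)-\lambda + a\sum_{j<t}\step_{j}$ there. If the algorithm is initialized with enough margin — i.e. $z_{\play\pure}(1)-\lambda$ exceeds the threshold defining membership in $\mathcal{U}$ (this is the meaning of ``initialized not too far from $\eq$'') — then the right-hand side never falls below that threshold, forcing $x(\tau)\in\mathcal{U}$ and hence $\tau=\infty$ on $E$. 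Thus $z_{\play\pure}(t)\ge a\sum_{j=1}^{t}\step_{j}-K$ for all $t$ with $K$ almost surely finite on $E$, and feeding this into the first display gives $\norm{x(t)-\eq}\le C'e^{-a\sum_{j=1}^{t}\step_{j}}$ with $C'$ random and finite, which is \eqref{eq:rate-stoch}; since $\sum_{j}\step_{j}=\infty$, it also yields $x(t)\to\eq$ on $E$. I expect this probabilistic invariance to be the main obstacle: unlike the deterministic case, a single large excursion of the noise can eject the iterate from $\mathcal{U}$, so the argument rests on quantitatively trading the summable aggregate variance $\sum_{j}\step_{j}^{2}$ against the confidence level $\eps$ and the size of the basin — which is exactly what forces both the summability condition $\sum_{j}\step_{j}^{2}<\sum_{j}\step_{j}=\infty$ and the ``small enough step-size.''
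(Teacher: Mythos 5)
Your proposal is correct and follows essentially the same route as the paper's proof: the same score-difference variables $z_{\play\pure}$, the same drift-plus-martingale decomposition of the \eqref{eq:HEDGE} recursion, Doob's maximal inequality with a union bound over the pairs $(\play,\pure)$ traded against $\eps$ via $\sum_{t}\step_{t}^{2}$, and the same invariance argument keeping the iterate in the score-defined neighborhood (your stopping-time formulation makes explicit what the paper calls ``a straightforward induction''). The only presentational difference is a sign convention on $z$ and that the paper packages the identification of the neighborhood $U_{M}$ with a score threshold into Proposition \ref{prop:Fenchel}, which you invoke implicitly.
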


\begin{corollary}
\label{cor:rate-stoch-opt}
With assumptions as above, if the algorithm \eqref{eq:HEDGE} is run with a step-size of the form $\step_{t} = \step/t^{\beta}$ for some sufficiently small $\step>0$ and $\beta\in(1/2,1)$, we have
\begin{equation}
\label{eq:rate-stoch-opt}
\probof*{\norm{x(t) - \eq} = \bigoh\left(e^{-\frac{a\step}{1-\beta}t^{1-\beta}}\right)}
	\geq 1-\eps,
\end{equation}
\end{corollary}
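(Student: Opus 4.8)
The plan is to derive Corollary~\ref{cor:rate-stoch-opt} directly from Theorem~\ref{thm:conv-stoch} by specializing to the step-size $\step_{t} = \step/t^{\beta}$ and estimating the partial sums $\sum_{j=1}^{t}\step_{j}$ that appear in the exponent of~\eqref{eq:rate-stoch}. First I would check that this choice of step-size is admissible, i.e.\ that it satisfies the hypothesis $\sum_{t}\step_{t}^{2} < \sum_{t}\step_{t} = \infty$ of Theorem~\ref{thm:conv-stoch}. Indeed, $\sum_{t} \step_{t} = \step\sum_{t} t^{-\beta}$ diverges since $\beta < 1$, whereas $\sum_{t}\step_{t}^{2} = \step^{2}\sum_{t} t^{-2\beta}$ converges since $2\beta > 1$; thus the range $\beta\in(1/2,1)$ is exactly what makes the two summability requirements compatible. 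Taking $\step$ small enough then also meets the ``small enough step-size'' requirement of the theorem, so Theorem~\ref{thm:conv-stoch} applies and yields an event of probability at least $1-\eps$ on which $\norm{x(t)-\eq}\leq C' e^{-a\sum_{j=1}^{t}\step_{j}}$ for all $t$.

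The second step is a standard integral comparison for the partial sums. Since $x\mapsto x^{-\beta}$ is decreasing,
\begin{equation*}
\sum_{j=1}^{t} j^{-\beta}
	\geq \int_{1}^{t+1} x^{-\beta}\,dx
	= \frac{(t+1)^{1-\beta}-1}{1-\beta}
	\geq \frac{t^{1-\beta}}{1-\beta} - \frac{1}{1-\beta},
\end{equation*}
so that $\sum_{j=1}^{t}\step_{j} \geq \frac{\step}{1-\beta}\, t^{1-\beta} - \frac{\step}{1-\beta}$. It is important here to use a \emph{lower} bound on the partial sum, since it enters the exponent with a negative sign; the additive constant $-\step/(1-\beta)$ is harmless and will only inflate the prefactor.

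Substituting this lower bound into the estimate furnished by Theorem~\ref{thm:conv-stoch}, on the same probability-$(1-\eps)$ event I obtain, for all $t$,
\begin{equation*}
\norm{x(t)-\eq}
	\leq C' e^{-a\sum_{j=1}^{t}\step_{j}}
	\leq C' e^{\frac{a\step}{1-\beta}}\, e^{-\frac{a\step}{1-\beta} t^{1-\beta}}.
\end{equation*}
Since $C' e^{a\step/(1-\beta)}$ is a finite (random) constant independent of $t$, this is precisely the asserted bound $\norm{x(t)-\eq} = \bigoh\bigl(e^{-\frac{a\step}{1-\beta} t^{1-\beta}}\bigr)$, and it holds on an event of probability at least $1-\eps$, which gives~\eqref{eq:rate-stoch-opt}. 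There is no real obstacle here beyond bookkeeping: the only point requiring a little care is the direction of the integral comparison — one needs the valid-for-all-$t$ lower bound rather than the mere asymptotic equivalence $\sum_{j\le t} j^{-\beta}\sim t^{1-\beta}/(1-\beta)$ — together with the observation that the coefficient $a\step/(1-\beta)$ in the exponent is exactly the leading constant produced by that comparison.
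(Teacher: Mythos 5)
Your proposal is correct and is exactly the argument the paper intends: the corollary is stated without a separate proof precisely because it follows from Theorem~\ref{thm:conv-stoch} by verifying the $\ell^{2}$--$\ell^{1}$ summability of $\step_{t}=\step/t^{\beta}$ (with $\step$ small enough to meet the variance condition) and lower-bounding the partial sums via integral comparison. Your care about using a uniform-in-$t$ lower bound on $\sum_{j\leq t}\step_{j}$, absorbing the additive constant into the prefactor, is the right bookkeeping.
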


\begin{remark}
In contrast to the full information case, the ``$\ell^{2} - \ell^{1}$'' summability requirement $\sum_{t=1}^{\infty} \step_{t}^{2} < \sum_{t=1}^{\infty} \step_{t} = \infty$ constrains the admissible step-size policies that lead to strict equilibrium (for instance, constant step-size policies are no longer admissible).
In particular, the most aggressive step-size that can be used in the presence of noise is $\step_{t} \propto t^{-\beta}$ for some $\beta$ close (but not equal) to $1/2$, leading to a convergence rate of $\lambda^{t^{1-\beta}}$ for some $\lambda<1$ (cf. Corollary \ref{cor:rate-stoch-opt}).
This bound on $\beta$ is due to the second moment control required by Doob's maximal inequality;
if there is finer control on the moments of the noise process $\noise$ (for instance, if the noise is sub-exponential), the lower bound $\beta > 1/2$ can be pushed all the way down to $\beta > 0$, implying a quasi-linear convergence rate.

\end{remark}

As was hinted above, the main idea behind the proof of Theorem \ref{thm:conv-stoch} is to use Doob's maximal inequality for martingales to show that the probablity of $x(t)$ escaping the basin of attraction of a strict equilibrium $\eq$ can be made arbitrarily small if the algorithm's step-size is chosen appropriately.
Once this probabilistic estimate is in place, convergence is obtained roughly as in the case of Theorem \ref{thm:conv-det}.

\smallskip

Building on this, if $\eq$ is \emph{globally} strict, we have the stronger result:
\vspace{-2pt}

\begin{theorem}
\label{thm:conv-global}
Suppose that the algorithm \eqref{eq:HEDGE} is run with a step-size $\step_{t}$ such that $\sum_{t=1}^{\infty} \step_{t}^{2} < \sum_{t=1}^{\infty} \step_{t} = \infty$ and imperfect pure payoff observations satisfying Hypotheses \eqref{eq:zeromean} and \eqref{eq:MSE}.
If $\eq$ is a globally strict equilibrium, then:
\vspace{-5pt}
\begin{enumerate}
\addtolength{\itemsep}{2pt}
\item
$x(t)\to\eq$ \as.
\item
There exists a \textpar{deterministic} constant $c>0$ depending only on the game such that
\begin{equation}
\label{eq:rate-global}
\norm{x(t) - \eq}
	= \bigoh(e^{-c \sum_{j=1}^{t} \step_{j}}).
\end{equation}
\end{enumerate}
\end{theorem}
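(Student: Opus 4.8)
The plan is to read Theorem~\ref{thm:conv-global} as the unconditional counterpart of Theorem~\ref{thm:conv-stoch}: since $\eq$ is globally strict, the variational inequality \eqref{eq:strict-var} holds on all of $\strats$, so the drift of a suitable energy function is non-positive \emph{everywhere} and the probabilistic escape estimate (Doob's inequality) that drove the local analysis is no longer needed. As energy function I would use the Fenchel coupling
\[
\fench(\eq, y) = \sum_{\play} \Big[ \log\big(\txs\sum_{\pure\in\pures_\play} \exp(y_{\play\pure})\big) - y_{\play\peq_\play} \Big] \geq 0,
\]
whose $\play$-th summand is exactly $-\log x_{\play\peq_\play}$; hence $\fench$ vanishes precisely when $x = \logit(y) = \eq$ and, via $1-u\le-\log u$, it dominates the distance through $\norm{\logit(y) - \eq} \leq 2\fench(\eq, y)$. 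Using the $1$-smoothness of the log\textendash sum\textendash exp map (equivalently, Pinsker's inequality), the update \eqref{eq:HEDGE} yields the one-step bound
\[
\fench(\eq, y(t+1)) \leq \fench(\eq, y(t)) + \step_t \braket{\hat\payv(t)}{x(t) - \eq} + \tfrac{1}{2}\step_t^2 \txs\sum_\play \dnorm{\hat\payv_\play(t)}^2 .
\]

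Taking $\exof{\argdot \given \filter_t}$ and invoking the unbiasedness \eqref{eq:unbiased}, the second-moment bound \eqref{eq:vbound}, and the \emph{global} form of \eqref{eq:strict-var}, this becomes
\[
\exof{\fench(\eq, y(t+1)) \given \filter_t} \leq \fench(\eq, y(t)) - \tfrac{a}{2}\step_t \norm{x(t) - \eq} + \tfrac{1}{2} \nPlayers \vbound^2 \step_t^2 ,
\]
i.e. an almost-supermartingale with summable positive part (because $\sum_t \step_t^2 < \infty$) and non-positive drift. The Robbins\textendash Siegmund convergence theorem then gives that $\fench(\eq, y(t))$ converges \as to a finite limit and that $\sum_t \step_t \norm{x(t) - \eq} < \infty$ \as; combined with $\sum_t \step_t = \infty$ this forces $\liminf_t \norm{x(t) - \eq} = 0$, and since $\fench(\eq,y(t))$ converges while $\fench(\eq,y) = 0$ iff $x = \eq$, the limit of $\fench$ must be $0$ and hence $x(t) \to \eq$ \as. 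This establishes claim~(1).

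For the rate in claim~(2), I would pass to the per-coordinate score gaps $z_{\play\pure}(t) = y_{\play\peq_\play}(t) - y_{\play\pure}(t)$, $\pure\neq\peq_\play$, whose \eqref{eq:HEDGE}-increments are $\step_t[\hat\payv_{\play\peq_\play}(t) - \hat\payv_{\play\pure}(t)]$. By claim~(1), there is \as a (random) time $t_0$ past which $x(t)$ is close enough to $\eq$ that continuity and strictness make every conditional drift $\payv_{\play\peq_\play}(x(t)) - \payv_{\play\pure}(x(t))$ exceed $a/2$, with $a$ the game constant of Proposition~\ref{prop:strict-var}. Decomposing $z_{\play\pure}(t)$ into this drift plus the martingale $M_{\play\pure}(t)$ built from the centered estimates, Hypotheses \eqref{eq:zeromean}\textendash\eqref{eq:MSE} bound $\sum_t \exof{(\Delta M_{\play\pure}(t))^2 \given \filter_t}$ by a multiple of $\sum_t \step_t^2 < \infty$, so $M_{\play\pure}(t)$ converges \as and thus $z_{\play\pure}(t) \geq \tfrac{a}{2}\sum_{j=1}^{t}\step_j - \bigoh(1)$ \as. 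Substituting into the logit map via $1 - x_{\play\peq_\play}(t) \leq \sum_{\pure\neq\peq_\play}\exp(-z_{\play\pure}(t))$ and summing over players gives $\norm{x(t) - \eq} = \bigoh(e^{-c\sum_{j=1}^t \step_j})$ with the deterministic exponent constant $c = a/2$, the (random) multiplicative constant absorbing $t_0$ and the martingale limits.

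The main obstacle is the almost-sure convergence step: one has to pin down the energy inequality precisely (the smoothness/strong-convexity constant is $1$, which fixes the coefficient of the quadratic term) and then upgrade $\liminf_t\norm{x(t)-\eq}=0$ to genuine convergence, which rests on the coercivity $\fench(\eq,y)=0 \iff x=\eq$ together with convergence of $\fench(\eq,y(t))$. In the rate step the delicate point is that the aggregate noise entering the score gaps is controlled not by a uniform bound but by martingale convergence, which is exactly where $\sum_t\step_t^2<\infty$ is indispensable \textendash\ and why, in contrast to the noiseless Theorem~\ref{thm:conv-det}, constant or overly aggressive step-sizes are excluded.
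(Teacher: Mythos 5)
Your proposal is correct, and while it runs on the same Lyapunov function as the paper \textendash\ your Fenchel coupling $\fench(\eq,y)$ coincides with the \ac{KL} divergence $\dkl(\eq,x)$ for $x=\logit(y)$, which is exactly the identity established in Proposition~\ref{prop:Fenchel}, and your one-step inequality is the paper's \eqref{eq:Dbound-stoch1} \textendash\ the way you extract almost-sure convergence is genuinely different. The paper argues by contradiction that $x(t)$ is \emph{recurrent} near $\eq$ (telescoping the one-step bound and invoking the strong law of large numbers for martingale differences together with Doob's martingale convergence theorem), and then combines recurrence with the local trapping analysis of Theorem~\ref{thm:conv-stoch}: each visit to $U_{2M}$ traps the process with positive conditional probability, so it a.s.\ exits $U_{2M}$ only finitely many times. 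You instead take conditional expectations in the one-step bound, use the \emph{global} inequality \eqref{eq:strict-var} to get the almost-supermartingale
$\exof{\fench(\eq,y(t+1))\given\filter_{t}} \leq \fench(\eq,y(t)) - \tfrac{a}{2}\step_{t}\norm{x(t)-\eq} + \tfrac{1}{2}\nPlayers\vbound^{2}\step_{t}^{2}$,
and apply Robbins\textendash Siegmund: $\fench$ converges a.s.\ and $\sum_{t}\step_{t}\norm{x(t)-\eq}<\infty$, which with $\sum_{t}\step_{t}=\infty$ forces $\liminf_{t}\norm{x(t)-\eq}=0$ and hence, by convergence of $\fench$ and $\norm{x-\eq}\leq 2\fench$, full convergence. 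What this buys you is directness and rigor: you bypass the paper's somewhat delicate zero-one/trapping step (``the probability that $x(t)$ exits $U_{2M}$ infinitely many times is zero''), which implicitly needs a uniform lower bound on the trapping probability at each re-entry; the price is reliance on the Robbins\textendash Siegmund lemma, whereas the paper only uses the elementary martingale toolkit it already cites. Your rate argument (part 2) is essentially the paper's: localize after a random time $t_{0}$, telescope the score gaps $z_{\play\pure_{\play}}$, and control the accumulated noise \textendash\ you via a.s.\ convergence of the $L^{2}$-bounded martingale (slightly stronger), the paper via the martingale law of large numbers \textendash\ yielding a deterministic exponent (your $c=a/2$ versus the paper's ``any $c<a$'') and a random multiplicative constant, exactly as the $\bigoh(\cdot)$ in \eqref{eq:rate-global} permits.
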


\begin{corollary}
\label{cor:rate-global}
With assumptions as above, if \eqref{eq:HEDGE} is run with a step-size of the form $\step_{t} = \step/t^{\beta}$ some $\beta\in(1/2,1)$, we have
\begin{equation}
\label{eq:rate-global-opt}
\norm{x(t) - \eq}
	= \bigoh(e^{-\frac{a\step}{1-\beta} t^{1-\beta}}).
\end{equation}
\end{corollary}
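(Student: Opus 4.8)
The plan is to separate the two claims: I would establish almost sure convergence (part~1) through a Lyapunov argument driven by the \emph{global} variational inequality \eqref{eq:strict-var}, and then derive the exponential rate (part~2) by analysing the players' score differences once convergence is in hand. The essential difference from Theorem~\ref{thm:conv-stoch} is that global strictness makes \eqref{eq:strict-var} valid on all of $\strats$, so no ``escape from the basin'' can occur and the high-probability statement upgrades to an almost sure one.

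For part~1, I would track the Fenchel coupling $\fench(t) = \fench(\eq, y(t))$ between the aggregate score $y(t)$ and the target $\eq$ (equivalently $-\sum_{\play}\log x_{\play\peq_\play}(t)$, since $\eq$ is a vertex). Using smoothness of the log-partition conjugate of the entropic regularizer together with $y(t+1)=y(t)+\step_t\hat\payv(t)$, one gets
\begin{equation*}
\fench(t+1) \leq \fench(t) + \step_t \braket{\hat\payv(t)}{x(t)-\eq} + \tfrac{1}{2}\step_t^2 \dnorm{\hat\payv(t)}^2 .
\end{equation*}
Conditioning on $\filter_t$ and invoking \eqref{eq:unbiased}, \eqref{eq:vbound} and the global form of \eqref{eq:strict-var} yields
\begin{equation*}
\exof*{\fench(t+1) \given \filter_t} \leq \fench(t) - \tfrac{1}{2} a\, \step_t \norm{x(t)-\eq} + \tfrac{1}{2}\step_t^2 \vbound^2 .
\end{equation*}
Since $\sum_t\step_t^2<\infty$, this is a Robbins--Siegmund recursion, so $\fench(t)$ converges \as and $\sum_t\step_t\norm{x(t)-\eq}<\infty$ \as. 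As $\sum_t\step_t=\infty$, the latter forces $\liminf_t\norm{x(t)-\eq}=0$; together with the almost sure convergence of $\fench(t)$ and the fact that $\fench(\eq,\cdot)$ tends to $0$ exactly as $x\to\eq$, this pins the limit of $\fench(t)$ at $0$, giving $x(t)\to\eq$ \as.

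For part~2, I would work on the a.s.\ event $\{x(t)\to\eq\}$ and study the score gaps $g_{\play\pure_\play}(t)=y_{\play\peq_\play}(t)-y_{\play\pure_\play}(t)$ for $\pure_\play\neq\peq_\play$. Splitting the recursion into drift and noise,
\begin{equation*}
g_{\play\pure_\play}(t) = g_{\play\pure_\play}(1) + \sum_{j<t}\step_j\bracks*{\payv_{\play\peq_\play}(x(j))-\payv_{\play\pure_\play}(x(j))} + M_{\play\pure_\play}(t),
\end{equation*}
where $M_{\play\pure_\play}$ collects the zero-mean increments coming from \eqref{eq:zeromean}. Because $\sum_t\step_t^2<\infty$ and the increments have conditionally bounded second moments by \eqref{eq:MSE}, $M_{\play\pure_\play}$ is an $L^2$-bounded martingale and converges \as to a finite random limit. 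Since $x(j)\to\eq$, the mixed payoff differences converge to the strict gap $\pay_\play(\peq)-\pay_\play(\pure_\play;\peq_{-\play})\geq a$, so for any $c<a$ they exceed $c$ from some random stage onward; hence $g_{\play\pure_\play}(t)\geq c\sum_{j\leq t}\step_j - C_{\play\pure_\play}$ for a random constant $C_{\play\pure_\play}$. Finally the logit map gives $1-x_{\play\peq_\play}(t)\leq\sum_{\pure_\play\neq\peq_\play}\exp(-g_{\play\pure_\play}(t))$, and summing over players yields $\norm{x(t)-\eq}=\bigoh(e^{-c\sum_{j\leq t}\step_j})$ with $c$ depending only on the game.

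I expect the main obstacle to be the noise control in part~2: one must show that the accumulated randomness in each score gap stays bounded instead of overwhelming the drift, which grows linearly in $\sum_{j\leq t}\step_j$. This is exactly where the $\ell^2$ summability $\sum_t\step_t^2<\infty$ is indispensable---it is what makes the noise martingale $M_{\play\pure_\play}$ converge almost surely, so that the deterministic reinforcement of $\peq_\play$ eventually dominates and the logit map converts the linear score growth into an exponential rate. A secondary subtlety is the passage from $\liminf_t\norm{x(t)-\eq}=0$ to genuine convergence in part~1, which relies on the almost sure convergence of the Lyapunov function supplied by Robbins--Siegmund.
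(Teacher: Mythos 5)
Your proposal is, in substance, a correct proof of Theorem~\ref{thm:conv-global} \textendash\ but note that the statement under review is Corollary~\ref{cor:rate-global}, whose entire content, given the theorem, is the specialization $\step_t = \step/t^{\beta}$. That step is absent from your write-up: one needs $\sum_{j=1}^{t}\step_j \geq \step\int_1^{t+1}s^{-\beta}\dd s = \frac{\step}{1-\beta}\bracks{(t+1)^{1-\beta}-1}$, so that $e^{-c\sum_{j=1}^{t}\step_j} = \bigoh\big(e^{-\frac{c\step}{1-\beta}t^{1-\beta}}\big)$. It is a one-line computation, but it is precisely what turns the theorem's rate into \eqref{eq:rate-global-opt}, so it should appear. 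A related imprecision (inherited from the paper itself, which concedes after Theorem~\ref{thm:conv-global} that \eqref{eq:rate-global} holds for all $c<a$): what your argument \textendash\ and the paper's \textendash\ actually delivers is the exponent $\frac{c\step}{1-\beta}$ for arbitrary $c<a$, and since $e^{-ct^{1-\beta}}$ is \emph{not} $\bigoh(e^{-at^{1-\beta}})$ for $c<a$, the constant $a$ in \eqref{eq:rate-global-opt} is a slight overstatement.

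As for the theorem itself, your route differs genuinely from the paper's, and in two respects is cleaner. For almost sure convergence, the paper does not use a Robbins\textendash Siegmund recursion: it argues by contradiction (via the strong law for martingale differences applied to $\sum_j \step_j\snoise_j$ and Doob's convergence theorem applied to $\sum_j\step_j^2\dnorm{\hat\payv(j)}^2$) that $x(t)$ visits every neighborhood of $\eq$ infinitely often, and then claims that, since $x(t)$ remains in $U_{2M}$ with positive probability after each entrance, it exits $U_{2M}$ only finitely many times. Your supermartingale bound on the Fenchel coupling yields $\sum_t \step_t \norm{x(t)-\eq} < \infty$ \as directly, pinning the limit of the Lyapunov function at zero and bypassing that delicate entrance/exit step \textendash\ a genuine gain in rigor. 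For the rate, the paper telescopes the score recursion from the random time $t_0$ after which $x(t)\in U_{2M}$, normalizes by $\theta_t = \sum_j \step_j$, and invokes the law of large numbers to show the bracketed term tends to $a$; your observation that each noise sum $\sum_j \step_j \eta_{\play\pure_{\play}}(j)$ is an $L^2$-bounded martingale (by $\sum_t\step_t^2<\infty$ and \eqref{eq:MSE}) and hence converges \as to a finite limit accomplishes the same thing more economically, letting the drift $c\sum_j\step_j$ dominate and the logit map convert linear score growth into the exponential bound. Both routes end at $\norm{x(t)-\eq} = \bigoh(e^{-c\sum_{j\leq t}\step_j})$ for every $c\in(0,a)$; only the corollary-specific computation above is missing.
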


As opposed to Theorems \ref{thm:conv-det} and \ref{thm:conv-stoch}, the proof of Theorem \ref{thm:conv-global} relies heavily on the so-called \acdef{KL} divergence \cite{kullback1951information}, defined here as
\begin{equation}
\label{eq:KL}
\dkl(\eq,x)
	= \sum_{\play\in\players} \sum_{\pure_{\play}\in\pures_{\play}}
	\eq_{\play\pure_{\play}} \log \frac{\eq_{\play\pure_{\play}}}{x_{\play\pure_{\play}}}
	\quad
	\text{for all $x\in\strats^{\circ}$}.
\end{equation}
%where $\eq$ is the strict equilibrium in question and $x\in\strats^{\circ}$.
The \ac{KL} divergence is a positive-definite, asymmetric distance measure that is particularly well-adapted to the analysis of the replicator dynamics \cite{Wei95,San10,LM13}.
By using this divergence as a discrete-time Lyapunov function, we show that $\eq$ is a \emph{recurrent point} of the process $x(t)$, i.e. $x(t)$ visits any neighborhood of $\eq$ infinitely many times.
We then use an argument similar to the proof of Theorem \ref{thm:conv-stoch} to show that the process actually converges to $\eq$ at an asymptotic rate of $\bigoh(e^{-c\sum_{j=1}^{t}\step_{j}})$.

The step-size assumption in the statement of Theorem \ref{thm:conv-global} is key in achieving this, but it is important to note it can be relaxed to the lighter requirement $\sum_{j=1}^{t}\step_{j}^{2} \big/ \sum_{j=1}^{t}\step_{t} \to 0$ if the players' feedback noise is bounded (for instance, if players have access to their actual pure payoff information).
When this is the case, it is possible to achieve a convergence rate of the form $\bigoh(e^{-ct^{1-\beta}})$ for any $\beta>0$ by using a step-size sequence of the form $\step_{t}\propto 1/t^{\beta}$.
Finally, we should also note that the multiplicative constant in \eqref{eq:rate-global} is $\bigoh(\sum_{\play\in\players} \abs{\pures_{\play}})$, i.e. it is linear in the dimensionality of the game (just as in the case of Theorems \ref{thm:conv-det} and \ref{thm:conv-stoch}).
As for the constant $c>0$, \eqref{eq:rate-global} holds for all $c<a$ (where $a>0$ is the payoff-based convergence rate established in Theorems \ref{thm:conv-det} and \ref{thm:conv-stoch}), showing that Theorem \ref{thm:conv-global} guarantees essentially the same exponential convergence rate as Theorems \ref{thm:conv-det} and \ref{thm:conv-stoch}.

%----------------------------------------------------------------------
%%% PROOFS
%----------------------------------------------------------------------
\subsection{Proofs}
\label{sec:proofs}

Below we provide the proofs of the above results, relegating some technical details to Appendix \ref{app:auxiliary}.
For simplicity, we begin with the perfect information case (Theorem \ref{thm:conv-det});
we then build on this analysis to prove our convergence results in the presence of uncertainty (Theorems \ref{thm:conv-stoch} and \ref{thm:conv-global}).

\paragraph{Proof of Theorem \ref{thm:conv-det}.}
Suppose that $\eq = (\peq_{1},\dotsc,\peq_{\nPlayers})$ is a strict equilibrium.
Then, by continuity, there exists some $a>0$ and a neighborhood $U$ of $\eq$ such that $\payv_{\play\peq_{\play}}(x) - \payv_{\play\pure_{\play}}(x) \geq a$ for all $x\in U$ and all $\pure_{\play}\in\pures_{\play}\setminus\{\peq_{\play}\}$, $\play\in\players$.

Now, introduce the auxiliary variables
\begin{equation}
\label{eq:zdef}
z_{\play\pure_{\play}}
	= y_{\play\pure_{\play}} - y_{\play\peq_{\play}},
\end{equation}
where $y_{\play\pure_{\play}}$ represents the cumulative payoff score of strategy $\pure_{\play}\in\pures_{\play}$ (cf. the definition of \eqref{eq:HEDGE} in the previous section).
Then, for all $M>0$, Proposition \ref{prop:Fenchel} in Appendix \ref{app:auxiliary} shows that the set $U_{M} = \setdef{x = \logit(y)}{\text{$z_{\play\pure_{\play}}\leq -M$ for all $\pure_{\play}\in\pures_{\play}$, $\play\in\players$}}$ is a neighborhood of $\eq$ in $\strats^{\circ}$ which is contained in $U$ if $M$ is chosen large enough.
Thus, if $x(t) \in U_{M}$, we get:
\begin{subequations}
\begin{flalign}
\label{eq:zbound1}
z_{\play\pure_{\play}}(t+1)
	&= z_{\play\pure_{\play}}(t)
	+ \step_{t} \left[ \payv_{\play\pure_{\play}}(x(t)) - \payv_{\play\peq_{\play}}(x(t)) \right]
	\\
\label{eq:zbound2}
	&\leq z_{\play\pure_{\play}}(t) - a \step_{t}
	\\
\label{eq:zbound3}
	&\leq -M - a\step_{t}
\end{flalign}
\end{subequations}
implying in particular that $x(t+1) \in U_{M}$ as well.
Thus, if \eqref{eq:HEDGE} is initialized in $U_{M}$, we obtain by induction that $x(t)\in U_{M} \subseteq U$ for all $t$.

To proceed, given that $x(t)\in U_{M}$ for all $t$, telescoping \eqref{eq:zbound2} yields
%\(
\begin{equation}
\label{eq:zbound4}
z_{\play\pure_{\play}}(t+1)
	\leq -M - a \sum_{j=1}^{t} \step_{j}.
\end{equation}
%\)
Hence, from the definition of $\logit$ (see \eqref{eq:logit}), we obtain:
\begin{flalign}
\label{eq:xbound1}
x_{\play\peq_{\play}}(t+1)
%	= \frac{\exp(y_{\play\peq_{\play}}(t+1))}{\sum_{\pure_{\play}\in\pures_{\play}} \exp(y_{\play\pure_{\play}}(t+1))}
	&= \frac{1}{1 + \sum_{\pure_{\play}\neq\peq_{\play}} \exp(z_{\play\pure_{\play}}(t+1))}
	\notag\\
	&\geq 1 - \sum_{\pure_{\play}\neq\peq_{\play}} \exp(z_{\play\pure_{\play}}(t+1))
	\notag\\
	&\geq 1 - \sum_{\pure_{\play}\neq\peq_{\play}} e^{-M} e^{-a\sum_{j=1}^{t}\step_{j}}.
\end{flalign}
Therefore, since $\norm{x_{\play} - \eq_{\play}} = 1 - x_{\play\peq_{\play}} + \sum_{\pure_{\play}\neq\peq_{\play}} x_{\play\pure_{\play}} = 2 (1 - x_{\play\peq_{\play}})$, rearranging \eqref{eq:xbound1} yields
\begin{equation}
\norm{x(t+1) - \eq}
	\leq 2 \sum_{\play\in\players} \sum_{\pure_{\play}\neq\peq_{\play}} e^{-M} e^{-a\sum_{j=1}^{t}\step_{j}},
\end{equation}
and our assertion follows.
\hfill
\qed

\medskip

We now turn to feedback imperfections, starting with Theorem \ref{thm:conv-stoch}:

\paragraph{Proof of Theorem \ref{thm:conv-stoch}.}
With notation as in the proof of Theorem \ref{thm:conv-det}, set $z_{\play\pure_{\play}} = y_{\play\pure_{\play}} - y_{\play\peq_{\play}}$ and let $M>0$ be such that $\pay_{\play\peq_{\play}}(x) - \pay_{\play\pure_{\play}}(x) \geq a$ for all $\pure_{\play}\in\pures_{\play}\setminus\{\peq_{\play}\}$, $\play\in\players$, whenever $x\in U_{M}$.
Then, we have
\begin{flalign}
\label{eq:zbound-stoch1}
z_{\play\pure_{\play}}(t+1)
	&= z_{\play\pure_{\play}}(t)
	+ \step_{t} \left[ \payv_{\play\pure_{\play}}(x(t)) - \payv_{\play\peq_{\play}}(x(t)) \right]
	+ \step_{t} \eta_{\play\pure_{\play}}(t),
\end{flalign}
where $\eta_{\play\pure_{\play}}(t) = \hat\payv_{\play\pure_{\play}}(t) - \payv_{\play\pure_{\play}}(x(t)) - (\hat\payv_{\play\peq_{\play}}(t) - \payv_{\play\peq_{\play}}(x(t)))$.
Thus, assuming that \eqref{eq:HEDGE} is initialized in $U_{2M}$ and telescoping, we get
\begin{equation}
\label{eq:zbound-stoch2}
z_{\play\pure_{\play}}(t+1)
	\leq -2M
	+ \sum_{j=1}^{t} \step_{j} \left[ \payv_{\play\pure_{\play}}(x(j)) - \payv_{\play\peq_{\play}}(x(j)) \right]
	+ \sum_{j=1}^{t} \step_{j} \eta_{\play\pure_{\play}}(j).
\end{equation}

We now claim that, if $\step_{t}$ is chosen appropriately, we have
\begin{equation}
\probof*{\sup_{t} \insum_{j=1}^{t} \step_{j} \eta_{\play\pure_{\play}}(j) \leq M} \geq 1 - \eps / (\nPlayers (S_{\play} - 1)),
\end{equation}
where $S_{\play} = \abs{\pures_{\play}}$.
Indeed, let $X_{\play\pure_{\play}}(t) = \sum_{j=1}^{t}\step_{t} \eta_{\play\pure_{\play}}(t)$ and let $E_{\play}(t)$ denote the event $\sup_{1\leq j\leq t} \abs{X_{\play\pure_{\play}}(j)} \geq M$.
By Hypothesis \eqref{eq:zeromean}, $X_{\play\pure_{\play}}(t)$ is a martingale so Doob's maximal inequality \cite[Theorem 2.1]{HH80} yields
\begin{equation}
\label{eq:Doob}
\probof{E_{\play}(t)}
	\leq \frac{\exof{X_{\play\pure_{\play}}(t)^{2}}}{M^{2}}
	\leq \frac{2\noisevar \sum_{j=1}^{t} \step_{j}^{2}}{M^{2}},
\end{equation}
where we used the noise variance estimate
\begin{equation}
\exof{\eta_{\play\pure_{\play}}^{2}(t)}
	= \exof{\exof{\eta_{\play\pure_{\play}}^{2}(t) \given \filter_{t}}}
	\leq 2 \exof{ \exof{\dnorm{\noise_{\play}(t)}^{2} \given \filter_{t}} }
	\leq 2 \noisevar,
\end{equation}
and the fact that $\exof{\eta_{\play\pure_{\play}}(t) \eta_{\play\pure_{\play}}(t')} = 0$ if $t\neq t'$.
Since $E_{\play}(t+1) \subseteq E_{\play}(t) \subseteq\dotsc$, it follows that the event $E_{\play} = \intersect_{t=1}^{\infty} E_{\play}(t)$ occurs with probability $\probof{E_{\play}} \leq 2\noisevar\Gamma_{2}/M^{2}$ where $\Gamma_{2} = \sum_{t=1}^{\infty} \step_{t}^{2} < \infty$.
Thus, if $\step_{t}$ is chosen so that $\Gamma_{2} \leq \eps M^{2}/(2\nPlayers(S_{\play} - 1)\noisevar)$, we get $\probof{\text{$X_{\play\pure_{\play}}(t) \geq M$ for all $t$}} \leq \eps/(\nPlayers(S_{\play} - 1))$.

Assume therefore that $\Gamma_{2} \leq \eps M^{2}/(2\nPlayers(S_{\play} - 1)\noisevar)$.
Then, we obtain
\begin{equation}
\label{eq:pbound}
\probof*{\max\nolimits_{\play\in\play,\pure_{\play}\in\pures_{\play}}\sup\nolimits_{t}
	X_{\play\pure_{\play}}(t) \geq M}
%	\text{for some $\pure_{\play}\in\pures_{\play}$, $\play\in\players$}}
%	\notag\\
	\leq \sum_{\play\in\players} \sum_{\pure_{\play}\neq\peq_{\play}} \frac{\eps}{\nPlayers(S_{\play} - 1)}
	\leq \eps.
\end{equation}
Hence, going back to \eqref{eq:zbound-stoch2}, a straightforward induction shows that $x(t)\in U_{M}$ for all $t$ with probability at least $1-\eps$.
When this occurs, we also have
%\(
\begin{equation}
\probof{z_{\play\pure_{\play}}(t+1) \leq -M + a\sum_{j=1}^{t}\step_{j} \, \text{for all $n$}}
	\leq 1-\eps,
\end{equation}
%\)
and the bound \eqref{eq:rate-stoch} is obtained as in the proof of Theorem \ref{thm:conv-stoch}.
\hfill
\qed

%\paragraph{Proof of Corollary \ref{cor:rate-stoch-opt}.}
%Simply note that, for all $\beta\in(1/2,1)$, $\sum_{j=1}^{t-1} t^{-\beta} \geq \int_{1}^{t} s^{-\beta} \dd s = \frac{1}{1-\beta} (t^{1-\beta} - 1)$.
%\hfill
%\qed

\paragraph{Proof of Theorem \ref{thm:conv-global}.}
Since $\sum_{t=1}^{\infty} \step_{t} = \infty$, it clearly suffices to prove \eqref{eq:rate-global}.
%To that end, we rely heavily on the so-called \acdef{KL} divergence \cite{kullback1951information}, defined here as
%\begin{equation}
%\label{eq:KL}
%\dkl(\eq,x)
%	= \sum_{\play\in\players} \sum_{\pure_{\play}\in\pures_{\play}}
%	\eq_{\play\pure_{\play}} \log \frac{\eq_{\play\pure_{\play}}}{x_{\play\pure_{\play}}},
%\end{equation}
%where $\eq$ is the strict equilibrium in question and $x\in\strats^{\circ}$.
Then, given that $\eq = (\peq_{1},\dotsc,\peq_{\nPlayers})$ is pure,
%we readily get $\dkl(\eq,x) = - \sum_{\play\in\players} \log x_{\play\peq_{\play}}$
an easy calculation yields
\begin{flalign}
\label{eq:Dgrowth}
\dkl(\eq,x)
	&= -\sum_{\play\in\players} \log x_{\play\peq_{\play}}
	= - \sum_{\play\in\players} \log( 1-(1-x_{\play\pure_{\play}^{\ast}}))
	\notag\\
	&= - \sum_{\play\in\players} \log(1 - \norm{x_{\play} - \eq_{\play}}/2)\notag\\
	&\geq \frac{1}{2} \norm{x - \eq},
\end{flalign}
so it suffices to show that $\dkl(\eq,x(t))\to0$.
%\footnote{Recall here that $\norm{x_{\play} - \eq_{\play}} = (1 - x_{\play\peq_{\play}}) + \sum_{\pure_{\play}\neq\peq_{\play}} x_{\play\pure_{\play}} = 2(1 - x_{\play\peq_{\play}})$.}
With this in mind, let $D_{t} = \dkl(\eq,x(t))$.
Then, Proposition \ref{prop:Fenchel} in Appendix \ref{app:auxiliary} yields
\begin{flalign}
\label{eq:Dbound-stoch1}
D_{t+1}
%	&\leq D_{t} + \step_{t} \braket{\hat\payv(t)}{x(t) - \eq} + \step_{t}^{2} \vbound^{2}
%	\notag\\
	&\leq D_{t}
	+ \step_{t} \braket{\payv(x(t))}{x(t) - \eq}
	+ \step_{t} \snoise_{t}
	+ \frac{1}{2}\step_{t}^{2} \dnorm{\hat\payv(t)}^{2},
\end{flalign}
where we have set $\snoise_{t} = \braket{\noise(t)}{x(t) - \eq}$.
Using this bound, we will show that $x(t)$ visits any neighborhood $U$ of $\eq$ infinitely many times.

Indeed, assume on the contrary that this is not so.
Then, by Proposition \ref{prop:strict-var}, there exists some $\delta>0$ such that $\braket{\payv(x(t))}{x(t) - \eq} \leq -\alpha\delta$ for all sufficiently large $t$.
Hence, telescoping \eqref{eq:Dbound-stoch1} yields
\begin{flalign}
\label{eq:Dbound-stoch2}
D_{t+1}
	&\leq D_{0}
	- a\delta \sum_{j=1}^{t} \step_{j}
	+ \sum_{j=1}^{t} \step_{j} \snoise_{j}
	+ \frac{1}{2} \sum_{j=1}^{t} \step_{j}^{2} \dnorm{\hat\payv(j)}^{2}
	\notag\\
	&\leq D_{0}
	- \theta_{t} \left[
		a\delta
		- \frac{\sum_{j=1}^{t} \step_{j}\snoise_{j}}{\theta_{t}}
		- \frac{\sum_{j=1}^{t} \step_{j}^{2} \dnorm{\hat\payv(j)}^{2}}{2\theta_{t}}.
		\right]
\end{flalign}
Since $\exof{\snoise_{j} \given \filter_{j}} = \exof{\braket{\noise_{j}}{x(j) - \eq} \given \filter_{j}} = \braket{\exof{\noise_{j} \given \filter_{j}}}{x(j) - \eq} = 0$ and $\exof{\dnorm{\snoise_{j}}^{2}} \leq 4 \exof{ \dnorm{\noise_{j}}^{2} \given \filter_{j}} \leq 4 \vbound^{2}$, it follows that the sum $R_{t} = \sum_{j=1}^{t} \step_{j} \snoise_{j}$ is an $L^{2}$-bounded martinagle \cite{HH80}.

Hence, by the law of large numbers for martingale differences \cite[Theorem 2.18]{HH80}, it follows that $\theta_{t}^{-1} \sum_{j=1}^{t} \step_{j} \snoise_{j} \to 0$ \as.
Likewise, 
if we let $S_{t} = \sum_{j=1}^{t} \step_{j}^{2} \dnorm{\hat\payv(j)}^{2}$, we readily get 
\[\exof{S_{t}} = \exof{\exof{S_{t}} \given \filter_{t}} \leq \vbound^{2} \sum_{j=1}^{t} \step_{j}^{2} \leq \Gamma_{2} \vbound^{2},\]
 where $\Gamma_{2} = \sum_{j=1}^{\infty} \step_{j}^{2}$.
Hence, by Doob's martingale convergence theorem \cite[Theorem 2.5]{HH80}, $S_{t}$ converges to some (random) finite value \as.
Combining the above, we conclude that the term in the brackets of \eqref{eq:Dbound-stoch2} converges to $a\delta$ \as.
In turn, this implies that $D_{t+1} \to -\infty$, and this yields to a contradiction.

We have thus shown that $x(t)$ visits infinitely many times every neighborhood $U$ of $\eq$ \textendash\ and hence, in particular, the neighborhood $U_{2M}$ defined in the proof of Theorem \ref{thm:conv-stoch}.
Since $x(t)$ remains in $U_{2M}$ with positive probability, it follows that the probability that $x(t)$ exits $U_{2M}$ infinitely many times is zero.
We thus get $x(t) \in U_{2M}$ for all $t$ greater than some random (but finite) $t_{0}$;
hence, telescoping \eqref{eq:zbound-stoch1} starting at $t_{0}$ yields
\begin{equation}
\label{eq:zbound-stoch3}
z_{\play\pure_{\play}}(t+1)
	\leq -2M
	- \theta_{t} \left[ a - \theta_{t}^{-1} \sum\nolimits_{j=t_{0}}^{t} \step_{j} \eta_{\play\pure_{\play}}(j) \right].
\end{equation}
As above, the law of large numbers \cite[Theorem 2.18]{HH80} shows that the term in the brackets of \eqref{eq:zbound-stoch3} converges to $a$.
Hence, by \eqref{eq:xbound1}, we finally obtain $\norm{x(t) - \eq} = \bigoh(e^{-c\sum_{j=1}^{t}\step_{j}})$ for all $c\in(0,a)$, yielding our assertion.
\hfill
\qed

%----------------------------------------------------------------------
%%% PROOFS 1
%----------------------------------------------------------------------
%\section{Proofs 1}
%\label{sec:proofs-1}
%\input{Proofs-1}

%----------------------------------------------------------------------
%%% PROOFS 2
%----------------------------------------------------------------------
%\section{Proofs 2}
%\label{sec:proofs-2}
%\input{Proofs-2}

%----------------------------------------------------------------------
%%% CONCLUSIONS
%----------------------------------------------------------------------
\section{Conclusions}
\label{sec:conclusion}
%----------------------------------------------------------------------
%%% CONCLUSIONS
%----------------------------------------------------------------------
% !TEX root = ./WINE.tex

Our main goal in this paper was to analyse the convergence properties of the ``hedging'' variant of the \acl{EW} algorithm \cite{FS99} in generic $\nPlayers$-player games that admit a \acl{NE} in pure strategies.
Motivated by the applications of game theory to data networks, we focused on two different models regarding the information available to the players:
\begin{inparaenum}
[\itshape i\upshape)]
\item
perfect mixed payoff observations (where players know the payoff vector associated to their mixed strategies);
and
\item
imperfect, pure payoff observations (where players only know the payoff associated to each of their pure strategies, possibly up to some random estimation error).
\end{inparaenum}
Using the theory of stochastic approximation and discrete-time martingale processes, we show that the algorithm converges locally to a strict \acl{NE} and this convergence is exponentially fast \textendash\ even in the presence of uncertainty and noise of arbitrary magnitude.
%In such systems, we prove that our algorithm converges to a strict Nash equilibrium if initialized not too far from its equilibrium. Moreover, we provide a convergence rate.

An important extension of this work would be to consider the so-called ``bandit feedback'' setting where players are only able to observe the payoff of the action that they actually played and can only estimate the payoff of their other actions via the game's history.
Another important issue is that of asynchronicity, namely when players update at different times and there is a delay between playing and receiving feedback.
We intend to explore these directions in future work.
%Futur work includes to extend these results by reducing the information players may have on the system: for example how  the \eqref{eq:HEDGE}  algorithm can be extended in the context that  player observes only the payoff of the just played strategy?

%*************************************************************
%*****    APPENDIX
%*************************************************************
\appendix
\section{Auxiliary results}
\label{app:auxiliary}
%----------------------------------------------------------------------
%%% APP: AUXILIARY
%----------------------------------------------------------------------
% !TEX root = ./WINE.tex

\setcounter{equation}{0}
\numberwithin{equation}{section}
\numberwithin{theorem}{section}

We begin with the variational characterization \eqref{eq:strict-var} of strict \aclp{NE}:

\paragraph{Proof of Proposition \ref{prop:strict-var}.}
Assume first that $\eq$ is a strict equilibrium.
Then, for all $\play\in\players$, we have
\begin{flalign}
\label{eq:f1}
\braket{\payv_{\play}(x)}{x_{\play}-\eq_{\play}}
%	&= \sum_{\mathclap{\pure_{\play}\in\pures_{\play}}} (x_{\play\pure_{\play}} - \eq_{\play\pure_{\play}}) \, \pay_{\play}(\pure_{\play};x_{-\play}) 
%	\notag\\
	&= \pay_{\play}(x_{\play};x_{-\play}) - \pay_{\play}(\eq_{\play};x_{-\play})
	\notag\\
	%= \pay_{\play}(x) - \pay_{\play}(\peq_{\play};x_{-\play})
	&= \sum_{\pure_{\play}\neq\peq_{\play}} x_{\play\pure_{\play}} \pay_{\play}(\pure_{\play};x_{-\play})
	+ x_{\play\peq_{\play}} \, \pay_{\play}(\peq_{\play};x_{-i})
	- \pay_{\play}(\peq_{\play};x_{-\play})
	\notag\\
	& = \sum_{\pure_{\play}\neq\peq_{\play}} x_{\play\pure_{\play}}
	\left[
	\pay_{\play}(\pure_{\play};x_{-\play}) - \pay_{\play}(\peq_{\play};x_{-\play})
	\right],
\end{flalign}
where the first line is a consequence of \eqref{eq:braket} while the last one follows by noting that $\sum_{\pure_{\play}\neq\peq_{\play}} x_{\play\pure_{\play}} = 1 - x_{\play\peq_{\play}}$ and rearranging.
%The first equality is exactly the definition of the inner product whereas the second is obtained by definition of $\pay_{\play}$.
%This equation $\sum_{\pure_{\play}\neq\peq_{\play}} x_{\play\pure_{\play}} = 1 - x_{\play\peq_{\play}}$ allows to deduce  the latter equality after rearranging.

Now, by continuity \textendash\ and the fact that $\eq = (\peq_{1},\dotsc,\peq_{\nPlayers})$ is a strict equilibrium \textendash\ there exists a real number $a>0$ and a neighborhood $U$ of $\eq$ in $\strats$ such that  for all $\pure_{\play}\in\pures_{\play}\setminus\{\peq_{\play}\}$, $\pay_{\play}(\peq_{\play};x_{-\play}) - \pay_{\play}(\pure_{\play};x_{-\play}) \geq a > 0$.
Therefore:
\begin{equation}
\label{eq:f2}
\braket{\payv_{\play}(x)}{x_{\play}-\eq_{\play}}
	\leq -a \sum_{\pure_{\play}\neq\peq_{\play}} x_{\play\pure_{\play}}
\end{equation}
Hence, combining Eqs. \eqref{eq:f1} and \eqref{eq:f2}, we get the bound
\begin{equation}
\braket{\payv(x)}{x - \eq} =   \sum_{\play\in \players} \braket{\payv_{\play}(x)}{x_{\play} - \eq_{\play}}
	\leq -a \sum_{\play\in \players} \sum_{\pure_{\play}\neq\peq_{\play}} x_{\play\pure_{\play}} 
	\leq -\frac{a}{2}\norm{x_{\play} - \eq_{\play}},
\end{equation}
where the last inequality  follows from the fact that  $\eq_{\play\pure_{\play}} = 0$ if $\pure_{\play}\neq\peq_{\play}$
so $\norm{x_{\play} - \eq_{\play}} = 1 - x_{\play\peq_{\play}} + \sum_{\pure_{\play}\neq\peq_{\play}} x_{\play\pure_{\play}} = 2 \sum_{\pure_{\play}\neq\peq_{\play}} x_{\play\pure_{\play}}$.

Assume now that $\eq$ satisfies \eqref{eq:strict-var} but is not a strict \acl{NE}, so $\payv_{\play\pure_{\play}}(\eq) \leq \payv_{\play\pure_{\play}'}(\eq)$ for some $\pure_{\play}\in\supp(\eq_{\play})$, $\pure_{\play}'\in\pures_{\play}\setminus\{\pure_{\play}\}$, $\play\in\players$.
Then, if we take $x_{\play} = \eq_{\play} + \lambda (\bvec_{\play\pure_{\play}'} - \bvec_{\play\pure_{\play}})$ 
%\comment{$x_{\play} = ( \eq_{\play 1} , \dotsc , \eq_{\play \pure_{\play}'} + \lambda , \dotsc , \eq_{\play\pure_{\play}} - \lambda , \dotsc , \eq_{\play|\pures_{\play}|}) $ }
and $x_{-\play} = \eq_{-\play}$ with $\lambda>0$ small enough, we get
\begin{equation}
\braket{\payv(x)}{x - \eq}
	= \braket{\payv_{\play}(x)}{x_{\play} - \eq_{\play}}
	= \lambda \payv_{\play\pure_{\play}'}(\eq) - \lambda \payv_{\play\pure_{\play}}(\eq)
	\geq 0,
\end{equation}
in contradiction to \eqref{eq:strict-var} which yields $\braket{\payv(x)}{x-\eq} < 0$
This shows that $\eq$ is strict and completes our proof.
\hfill
$\qed$

We now prove two important properties of the logit map and the \ac{KL} divergence:
%that played a crucial role in the analysis of Section \ref{sec:proofs}:

\begin{proposition}
\label{prop:Fenchel}
Let $\pures = \{1,\dotsc,n\}$ be a finite set and let $\simplex\equiv\simplex(\pures)$ denote the $(n-1)$-dimensional simplex spanned by $\pures$.
Then:
\begin{enumerate}
\item
If $\eq\in\simplex$ is pure \textpar{i.e. $\supp(\eq) = \{\peq\}$ for some $\peq\in\pures$}, the set $U_{M} = \setdef{x = \logit(y)}{y_{\pure} - y_{\peq} \leq -M\:\text{for all $\pure\neq\peq$}}$ is a neighborhood of $\eq$ in $\intsimplex$;
furthermore, if $M$ is sufficiently large, $U_{M}$ is contained in a $\norm{\cdot}$-ball centered at $\eq$.
\item
Let $x=\logit(y)$, $x' = \logit(y')$ for some $y,y'\in\R^{n}$.
Then, we have
\begin{equation}
\dkl(\eq,x')
	\leq \dkl(\eq,x)
	+ \braket{y'-y}{x - \eq}
	+ \frac{1}{2} \dnorm{y' - y}^{2}.
\end{equation}
\end{enumerate}
\end{proposition}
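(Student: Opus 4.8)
The plan is to prove the two claims separately, with the second (the descent inequality for the \ac{KL} divergence) being the analytic heart of the statement, obtained through the smoothness of the log-sum-exp potential whose gradient is $\logit$.

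For part (1), I would pass to the gap coordinates $z_{\pure} = y_{\pure} - y_{\peq}$, in terms of which $\logit(y)$ reads $x_{\peq} = (1 + \sum_{\pure\neq\peq} e^{z_{\pure}})^{-1}$ and $x_{\pure} = e^{z_{\pure}} x_{\peq}$. On $\intsimplex$ these gaps are recovered as $z_{\pure} = \log(x_{\pure}/x_{\peq})$, so that $U_{M} = \setdef{x\in\intsimplex}{x_{\pure}\leq e^{-M} x_{\peq}\ \text{for all}\ \pure\neq\peq}$. To see that $U_{M}$ is a neighborhood of $\eq$ in $\intsimplex$, I would fix $\delta = e^{-M}/(1+e^{-M})$ and check that any $x\in\intsimplex$ with $\norm{x - \eq} < \delta$ satisfies $x_{\peq} > 1-\delta$ and $x_{\pure} < \delta$, whence $x_{\pure}/x_{\peq} < \delta/(1-\delta) = e^{-M}$ and thus $x\in U_{M}$; this yields the inclusion $B(\eq,\delta)\cap\intsimplex \subseteq U_{M}$. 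For the ball containment I would bound $x_{\peq}\geq (1+(n-1)e^{-M})^{-1}$ uniformly on $U_{M}$, so that $1 - x_{\peq}\leq (n-1)e^{-M}$ and hence $\norm{x - \eq} = 2(1-x_{\peq}) \leq 2(n-1)e^{-M}$, a radius that shrinks to $0$ as $M\to\infty$ and can therefore be made smaller than any prescribed tolerance.

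For part (2), I would introduce the potential $\fench(y) = \log\sum_{\pure} e^{y_{\pure}}$, whose gradient is exactly $\grad\fench(y) = \logit(y)$ and whose Hessian at $x = \logit(y)$ is the covariance matrix $\diag(x) - x x^{\top}$. The analytic input is that for any direction $v$ one has $v^{\top}\Hess\fench(y)\,v = \var_{x}(v) \leq \sum_{\pure} x_{\pure} v_{\pure}^{2} \leq \dnorm{v}^{2}$, so integrating the Taylor remainder, with $\int_{0}^{1}(1-s)\,ds = \tfrac12$, gives the descent estimate $\fench(y') \leq \fench(y) + \braket{\logit(y)}{y'-y} + \tfrac12\dnorm{y'-y}^{2}$. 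To connect this to the divergence, I would record the Fenchel-coupling identity coming from $\log x_{\pure} = y_{\pure} - \fench(y)$, namely $\dkl(\eq,\logit(y)) = \sum_{\pure}\eq_{\pure}\log\eq_{\pure} + \fench(y) - \braket{y}{\eq}$, the negative-entropy term being independent of $y$. Subtracting this identity at $y'$ and at $y$ cancels the entropy and yields $\dkl(\eq,x') - \dkl(\eq,x) = \fench(y') - \fench(y) - \braket{y'-y}{\eq}$; inserting the smoothness bound and regrouping $\braket{\logit(y)}{y'-y} - \braket{y'-y}{\eq} = \braket{y'-y}{x-\eq}$ produces the claimed inequality.

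The main obstacle I anticipate is the smoothness step: one must identify the Hessian of $\fench$ as a covariance, verify the bound $\var_{x}(v)\leq\dnorm{v}^{2}$ cleanly, and confirm that the Taylor remainder integrates to exactly the constant $\tfrac12$ matching the $\tfrac12\dnorm{y'-y}^{2}$ term. Once this is in place, the passage through the coupling identity and part (1) are essentially bookkeeping.
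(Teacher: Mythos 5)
Your proof is correct, and it departs from the paper's in two worthwhile ways. For part (1), you establish the neighborhood property \emph{directly}, exhibiting the inclusion $B(\eq,\delta)\cap\intsimplex\subseteq U_{M}$ with the explicit radius $\delta=e^{-M}/(1+e^{-M})$, whereas the paper argues by contradiction: it takes a sequence $x_{k}=\logit(y_{k})\to\eq$ lying outside $U_{M}$ and extracts a coordinate bounded below by $1/(e^{M}+n-1)$, contradicting $x_{k,\pure}\to 0$. Your version is constructive and quantitative; the ball-containment halves of the two proofs are essentially identical. For part (2), both arguments reduce the claim to the same two ingredients, namely the Fenchel-coupling identity
\begin{equation*}
\dkl(\eq,\logit(y))=\sum_{\pure}\eq_{\pure}\log\eq_{\pure}+\log\sum_{\pure}e^{y_{\pure}}-\braket{y}{\eq},
\end{equation*}
and the $1$-strong smoothness of the log-sum-exp function with respect to the $\ell^{\infty}$ norm; the difference is how the smoothness is obtained. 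The paper invokes duality: the negative entropy $h(x)=\sum_{\pure}x_{\pure}\log x_{\pure}$ is $1$-strongly convex over the simplex for the $\ell^{1}$ norm (a Pinsker-type fact cited from the literature), hence its conjugate $h^{\ast}$ is $1$-strongly smooth for the dual norm. You instead prove the smoothness from first principles: the Hessian of log-sum-exp at $y$ is $\diag(x)-xx^{\top}$ with $x=\logit(y)$, the quadratic form is a variance bounded as $\var_{x}(v)\leq\sum_{\pure}x_{\pure}v_{\pure}^{2}\leq\dnorm{v}^{2}$, and Taylor's theorem with integral remainder supplies exactly the constant $\tfrac12$. Your route is self-contained and elementary; the paper's is shorter but leans on an external theorem \textendash\ and, incidentally, its displayed smoothness inequality \eqref{eq:strong} has the inequality sign reversed (strong smoothness gives an \emph{upper} bound on $h^{\ast}(y')$, which is what \eqref{eq:Fbound} requires), a typo that your direct computation sidesteps.
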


\begin{proof}
For our first claim, we  assume to the contrary that $U_{M}$ is not a neighborhood of $\eq$ in $\strats^{\circ}$. So there exists a sequence $x_{k} = \logit(y_{k})$ in $\strats^{\circ}$ that converges to $\eq$, but $x_{k}\notin U_{M}$ for all $k$.
By passing to a subsequence if necessary, there exists some $\pure\in\pures$ such that $y_{k,\pure} - y_{k,\peq} > -M$ and $y_{k,\pure} \geq y_{k,\purealt}$ for all $\purealt$ (simply pick a constant subsequence of $\argmax\setdef{y_{k,\purealt}}{y_{k,\purealt} - y_{k,\peq} > -M}$ if needed).
We then get
\begin{flalign}
x_{k,\pure}
	&= \frac{e^{y_{k,\pure}}}{\sum_{\purealt} e^{y_{k,\purealt}}}
	= \frac{1}{e^{y_{k,\peq} - y_{k,\pure}} + \sum_{\purealt\neq\peq} e^{y_{k,\purealt} - y_{k,\pure}}}
 	\notag\\
	&\geq \frac{1}{e^{M}+ n -1},
\end{flalign}
contradicting the original assumption $x_{k,\pure}\to 0$ (since $x_{k}\to\eq$).

For the converse implication (namely that $U_{M}$ is contained in a ball centered at $\eq$), fix some $\delta>0$ and let $z_{\pure} = y_{\pure} - y_{\peq}$, $\pure\in\pures\setminus\{\peq\}$.
Then, letting $x=\logit(y)$ for some $y\in U_{M}$, we have
\begin{equation}
x_{\peq}
	= \frac{e^{y_{\peq}}}{\sum_{\pure\in\pures} e^{y_{\pure}}}
	= \frac{1}{1 + \sum_{\pure\neq\peq} e^{z_{\pure}}}
	\geq 1 - \sum_{\pure\neq\peq} e^{z_{\pure}}
	\geq 1 - (n - 1) e^{-M}.
\end{equation}
Thus, for $M > \frac{\abs{\log\delta}}{2(n-1)}$, we obtain
\begin{equation}
\norm{x - \eq}
	= 2(1 - x_{\peq})
	\leq 2(n-1) e^{-M}
	\leq \delta,
\end{equation}
implying that $U_{M}$ is contained in the ball $B_{\delta} = \setdef{x}{\norm{x-\eq} \leq \delta}$.

Finally, for our second claim, let $h(x) = \sum_{\pure} x_{\pure} \log x_{\pure}$, $x\in\simplex$, and let $h^{\ast}(y) = \max\{\braket{y}{x} - h(x)\} = \log\left( \sum_{\pure} e^{y_{\pure}}\right)$ denote the convex conjugate of $h$ \cite{Roc70,SS11}.
Then, a straightforward derivation yields
\begin{equation}
\label{eq:mirror}
\frac{\pd h^{\ast}}{\pd y_{\pure}}
	= \frac{\exp(y_{\pure})}{\sum_{\purealt} \exp(y_{\purealt})}
	= \logit_{\pure}(y),
\end{equation}
so, by the properties of the Legendre transform \cite{Roc70}, we get $\logit(y) = \argmax\{\braket{y}{x} - h(x)\}$.
Therefore, taking $x=\logit(y)$, the \ac{KL} divergence becomes
\begin{flalign}
\dkl(\eq,x)
	&= h(\eq) - h(x) - \braket{\nabla h(x)}{x - \eq}
	\notag\\
	&= h(\eq) + \braket{y}{x} - h(x) - \braket{y}{x} - \braket{\nabla h(x)}{x-\eq}
	\notag\\
	&= h(\eq) + h^{\ast}(y) - \braket{y}{\eq}
	\notag\\
	&\eqqcolon \fench(\eq,y),
\end{flalign}
where $\fench(\eq,y)$ is the so-called Fenchel coupling \cite{MS16}, and we used the fact that $y=\nabla h(x)$ (recall that $x = \logit(y)$ is defined as the maximizer of the quantity $\braket{y}{x} - h(x)$).

With this in mind, it suffices to show that
\begin{equation}
\label{eq:Fbound}
\fench(\eq,y')
	\leq \fench(\eq,y)
	+ \braket{y'-y}{\logit(y) - \eq}
	+ \frac{1}{2}\dnorm{y'-y}^{2}.
\end{equation}
However, since $h$ is $1$-strongly convex with respect to the $L^{1}$ norm \cite[p.~135]{SS11}, it follows that its convex conjugate $h^{\ast}$ is $1$-strongly smooth with respect to the $L^{\infty}$ norm (the dual of the $L^{1}$-norm) \cite[p.~148]{SS11}.
Specifically, this implies that
\begin{equation}
\label{eq:strong}
h^{\ast}(y')
	\geq h^{\ast}(y)
	+ \braket{y'-y}{\nabla h^{\ast}(y)}
	+ \frac{1}{2} \dnorm{y' - y}^{2}
\end{equation}
Eq.~\eqref{eq:Fbound} then follows by writing out the definition of $\fench(\eq,y')$ and then using \eqref{eq:strong} and \eqref{eq:mirror}.
\hfill
\qed
\end{proof}

%*************************************************************
%*****    BIBLIOGRAPHY
%*************************************************************
\bibliographystyle{siam}
\bibliography{wine}

\end{document}